\documentclass[journal]{IEEEtran}
\usepackage{graphicx}
\usepackage{cite}
\usepackage{xcolor}
\usepackage{amsfonts}
\usepackage{amsmath}
\usepackage{amssymb}
\usepackage{mathrsfs}
\usepackage{bbm}
\usepackage{fix-cm}
\usepackage{textcomp}
\usepackage[center]{subfigure}
\usepackage{euscript}
\usepackage{setspace}
\usepackage{subfigure}
\usepackage[justification=centering]{caption}
\usepackage{float}
\usepackage{stfloats}
\usepackage{lipsum}%生成随机文本

\newtheorem{corollary}{Corollary}
\newtheorem{assumption}{Assumption}
\newtheorem{remark}{Remark}
\newtheorem{definition}{Definition}

\newtheorem{lemma}{Lemma}
\newtheorem{theorem}{Theorem}

\newtheorem{example}{Example}
\newtheorem{fac}{Fact}
%\pagecolor[rgb]{0.9, 0.99, 0.9}
\ifCLASSINFOpdf

\else

\hyphenation{op-tical net-works semi-conduc-tor}
%IEEE COMPUTATIONAL INTELLIGENCE MAGAZINE

\begin{document}
%
% paper title
% can use linebreaks \\ within to get better formatting as desired
\title{%New results on Hessian Matrices and Stabilization for It\^o Stochastic T-S Models
%Line Integral Approach to   It\^o  Stochastic  T-S Models}
Stabilization of It\^o Stochastic T-S Models via Line Integral and Novel Estimate for  Hessian Matrices}

\author{Shaosheng Zhou,  Yingying Han, Baoyong Zhang
\thanks{
This work was supported  by
the National Natural Science Foundation of P. R. China
under Grants 61673149.

Corresponding author. Shaosheng Zhou is with
 Institute
of Information and Control, Department of Automation, Hangzhou
Dianzi University, Hangzhou Zhejiang 310018, P. R. China (e-mail:
sszhou65@163.com).

Yingying Han is with Institute
of Information and Control, Department of Automation, Hangzhou
Dianzi University, Hangzhou Zhejiang 310018, P. R. China (e-mail:
yyhan0826@163.com).

Baoyong Zhang is with the School of Automation,
Nanjing University of Science and  Technology, Nanjing Jiangsu  210094 , P. R.
China (e-mail: baoyongzhang@njust.edu.cn).
}}
\maketitle

\begin{abstract}
This paper proposes a line integral Lyapunov function approach to
stability analysis and stabilization for  It\^o stochastic T-S models. Unlike the deterministic case, stability analysis of
 this model  needs  the information of Hessian
matrix of the line integral Lyapunov  function which is related to  partial derivatives of the  basis
 functions. By  introducing a new method to handle these partial derivatives and using the property of state-dependent matrix with rank one, the stability
conditions of the underlying system  can be established via a line integral Lyapunov function. These conditions obtained are more general  than  the ones which are based on quadratic Lyapunov functions.
Based on the stability conditions,  a controller is developed
 by cone complementarity linerization algorithm.
A  non-quadratic Lyapunov function approach is thus proposed for the stabilization problem of the  It\^o
stochastic T-S models. It has been shown that the problem can be solved by optimizing  sum of traces for a group of products of matrix variables with linear
constraints. Numerical examples are given to illustrate the
effectiveness of the proposed control scheme.
  \end{abstract}

% Note that keywords are not normally used for peerreview papers.
\begin{IEEEkeywords}
Hessian matrix, line integral Lyapunov function, quadratic optimization,  stochastic nonlinear system, T-S model.
\end{IEEEkeywords}
% For peer review papers, you can put extra information on the cover
% page as needed:
 \ifCLASSOPTIONpeerreview
 \begin{center} \bfseries EDICS Category: 3-BBND \end{center}
 \fi
 %\footnotetext[ ] { This work was supported  by
%the National Natural Science Foundation of P. R. China
%under Grants 61673149.}
%\footnotetext[ ] {
%Corresponding author. Shaosheng Zhou is with
% Institute
%of Information and Control, Department of Automation, Hangzhou
%Dianzi University, Hangzhou Zhejiang 310018, P. R. China (e-mail:
%sszhou65@163.com).}
%
%\footnotetext[ ] {  Yingying Han is with Institute
%of Information and Control, Department of Automation, Hangzhou
%Dianzi University, Hangzhou Zhejiang 310018, P. R. China (e-mail:
%yyhan0826@163.com).}
% For peerreview papers, this IEEEtran command inserts a page break and
% creates the second title. It will be ignored for other modes.
\IEEEpeerreviewmaketitle

\section{Introduction}
The  past few decades have witnessed the significant advances of Takagi-Sugeno (T-S) fuzzy systems, see, for example \cite{feng06,PaoTsunLin2009,sa07,ta85} and references therein. The
main motivation should be attributed to their capabilities that allow them to represent and handle some sort of qualitative information  such as  operators'  experience, experts' knowledge and so forth involved in the plants. The practical experience of these skilled operators and experts' knowledge are usually summarized and expressed in language, and fuzzy set theory can  convert  such sort of qualitative information  into quantitative data in light of membership functions. For stability analysis and stabilization  of  fuzzy systems, many researchers have resorted to a  quadratic Lyapunov function approach \cite{CHFang2006,k2000,zhouren2010}. The main drawback concerning this approach lies in the fact that it is difficult for a common positive definite matrix to satisfy the stability conditions of all local linear models. In this situation, the non-quadratic Lyapunov function is a helpful alternative \cite{gv04,gb2012,LAMozelli2009,SZhou2005,zhou07}.  It has been  shown that the stability and stabilization results based on non-quadratic Lyapunov function are less conservative than those based on common quadratic ones. It is worth noting that unlike their discrete-time counterpart, continuous-time T-S models induce the problem of handling derivatives of the fuzzy basis functions that emerge when the basis dependent Lyapunov function is employed to derive stability and stabilization results \cite{gb2012,KTanaka2003,BZhang2007}. In such case, the upper bounds of the time derivatives of the fuzzy basis functions are used to analyze stability performance and to develop a controller. Since the time derivatives of fuzzy basis functions are related to the vector field that governs the T-S
models and are dependent on the state and control of the systems, their upper bounds are not always readily available, and sometimes they  do not even exist. To overcome these difficulties, the authors in \cite{RW2006} proposed a novel fuzzy Lyapunov function that was formulated as a line integral of a vector along a path from the origin to the current state. By using this line  integral function, the stability results did not
involve the upper bound of the time derivatives of fuzzy basis functions. The main issues in \cite{RW2006} are that the T-S fuzzy model have a specific premise structure and  that the stabilization results are provided by bilinear  matrix inequalities.

When a fuzzy system is disturbed by random perturbations, the system turns to be a stochastic fuzzy one. Recently, T-S fuzzy descriptions have been introduced to study the analysis and synthesis problem of  stochastic nonlinear systems \cite{gfwq2013,LHu2009,PuyinLiua2005,BZhang2007}. The local uniqueness results for the solutions of fuzzy stochastic differential equations were provided in \cite{TMalinowski2015}. For uncertain nonlinear stochastic time-delay systems, a fuzzy controller was developed to guarantee the robust asymptotic stability and attenuation performance \cite{chang2010}. In \cite{sheng2009}, by using Lyapunov method and stochastic analysis approaches, the
delay-dependent stability criterion was presented for T-S fuzzy
Hopfield neural networks with parametric uncertainties and stochastic
perturbations. For a class of T-S model based stochastic
systems, the stabilization problem was investigated in
\cite{zhouren2010}.  It should be pointed out that most
analysis or synthesis results of the T-S model based stochastic
systems mentioned above are obtained  by using or partly using a
common quadratic Lyapunov function.

Motivated by \cite{RW2006} and \cite{zhouren2010}, this paper
proposes a line integral Lyapunov function  approach to
stability analysis and stabilization for a class of It\^o stochastic  T-S model based nonlinear systems.
 By using the line integral  Lyapunov function
proposed in \cite{RW2006}, more general sufficient conditions for the stochastic
asymptotic stability of the  class of systems can be established. The control design
is facilitated by the introduction of some additional matrix
variables and a cone complementarity linearization algorithm. These variables
decouple the coupling terms in the obtained stability conditions and
make the control design feasible. It will show that the solution of
the controller design problem can be obtained by solving a
minimization problem with linear constraints.

The organization of this paper is as follows. Section 2 formulates
the problem and presents some preliminary results. In Section 3,
stochastic stability analysis is given. Controller design is presented  in
Section 4 and a numerical example
is given in Section 5 to illustrate the effectiveness of the proposed
approach. Finally, the paper is concluded in Section 6.

%\emph{Notation.}
Throughout this paper, a real symmetric matrix
$P>0\  (\geq 0)$ denotes $P$ being a positive definite (or positive
semidefinite) matrix, and $A>(\geq )B$ means $A-B>(\geq )0$. $I$ is
used to denote an identity matrix with proper dimensions. The trace
of a square matrix $A$ is denoted by tr$A$  and the notation $A^{\tau}$ represents
the transpose of the matrix $A$. The expression $A^{S}$ stands
for the sum of $A$ and $A^{\tau}$. For symmetric matrices, we use $(*)$
as an ellipsis for a block matrix that is induced by symmetry. The
notation $(\Omega,\mathcal {F},\mathcal{P})$ represents the
probability space with $\Omega$ the sample space, $\mathcal {F}$ the
$\sigma$-algebra of subsets of the sample space and $\mathcal{P}$
the probability measure on $\mathcal {F}$. Matrices, if not
explicitly stated, are assumed to be compatible for algebraic
operations.

\section{Problem Formulation and Preliminaries}
Consider an It\^o stochastic  T-S model
described by the following rules:\\
\textbf{Plant Rule} $ i$:
\textbf{If}  $x_{1}$ is $ F_{1}^{\alpha_{i1}}$,
$\cdots $ , and $x _{n}$ is $F_{n}^{\alpha_{in}}$, \textbf{then}
{\setlength\abovedisplayskip{6pt}
\setlength\belowdisplayskip{5pt}
 \[
 dx=\left(A_{i}x + B_{i}u \right)dt+C_{i}x dW(t)
 \]
 where $ i=1,\cdots, s$, $s$ is the total number
of fuzzy rules; $x=\left(x_{1},\ldots,x_{n}\right)^{\tau} \in
\mathbb{R}^{n}$, $u \in \mathbb{R}^{p}$ are the state vector and the
control input, respectively;
 for each pair $ (i,j )\in \{1,\cdots ,s\} \times \{1,\cdots ,n\}$, $F_{j}^{\alpha_{ij}}$ is the
fuzzy set in the  $i$th rule based on the premise variable $x _{j}$
with $\alpha_{ij}$ specifying which $x _{j}$-based fuzzy set is used
in the $i$th If-Then rule.  $W(t)$ is a Wiener process on a complete
probability space $(\Omega,\mathcal {F},\mathcal{P})$ adapted to a
filtration $\{\mathcal{F}_{t}\}_{t\geq0}$; $A_{i} \in \mathbb{R}^{n
\times n}$, $B_{i} \in \mathbb{R}^{n \times p}$ and $C_{i}\in
\mathbb{R}^{n \times n}$ are known real constant matrices.
In what follows, for notational convenience, we denote  the two sets
$\{1,\cdots,s\}$ and $\{1,\cdots,n\}$ by $\mathcal{S}$ and $\mathcal{N}$
respectively. Assume that for any $j\in \mathcal{N}$, the number of
all the $x_j$-based fuzzy  sets used in the  $s$ fuzzy rules is
$s_j$ with  $s_j$  a certain  positive integer. The superscript
$\alpha_{ij}$ in the fuzzy set $F_{j}^{\alpha_{ij}}$ is an ordered
index of all the  $s_j$ $x_j$-based fuzzy  sets
$\left\{F_{j}^{\alpha_{ij}}\right \}_{\alpha_{ij}=1}^{s_j}$ which
will play a role in the following development. If
$\alpha_{ij}=\rho_j,\ i \in \mathcal{S}$, which means that in the
$i$th fuzzy rule the premise variable $x _{j}(t)$ belongs to the
$\rho_j$th fuzzy set $F_{j}^{\rho_j}$, one can see that
\begin{equation*}
 1\leq\rho_j \leq s_j,\  j \in \mathcal{N}.
 \end{equation*}
For $i\in \mathcal{S}$, $j\in
\mathcal{N}$ and $x_j \in \mathbb{R}$, let
$w_{j}^{\alpha_{ij}}(x_j)$ be the membership function of
 fuzzy set $F_{j}^{\alpha_{ij}}$. The normalized membership functions are defined
by
\begin{equation}
\mu_{j}^{\alpha_{ij}}(x_{j})=\frac{w_{j}^{\alpha_{ij}}(x_j)}{\Sigma_{\alpha_{ij}=1}^{s_j}
w_{j}^{\alpha_{ij}}(x_j)},
     \ \ i\in \mathcal{S},\ \ j\in \mathcal{N}. \label{normalization}
\end{equation}
which satisfy
{\setlength\abovedisplayskip{6pt}
\setlength\belowdisplayskip{5pt}
\begin{equation}
0\leq \mu_{j}^{\alpha_{ij}}(x_j)\leq 1,\ \
\sum_{\alpha_{ij}=1}^{s_j}
\mu_{j}^{\alpha_{ij}}(x_j)=1.\label{small0}
\end{equation}}
Then, the fuzzy basis functions can be defined by
\begin{equation}
h_{i}(x)=\prod_{j=1}^{n} \mu_{j}^{\alpha_{ij}}(x_j),\ \ x
%=(x_1,\cdots,x_n)^{\tau}
\in \mathbbm{R}^n,
\ i\in \mathcal{S}.
\label{small012}
\end{equation}
One also has
$0 \leq h_{i}(x)\leq 1,\quad \sum_{i=1}^{s}h_{i}(x)=1.$
The It\^{o} stochastic  T-S model  can be
expressed by
\begin{equation}
dx\!=\!\sum_{i=1}^{s}h_{i}(x)\left(A_{i}x\! + \!B_{i}u \right)
dt\! + \!\sum_{i=1}^{s}h_{i}(x)C_{i}x dW(t).\label{a1}
\end{equation}
The premise variables in the  plant
rules described above are chosen as the state
variables $\{x_{j}\}_{j \in \mathcal{N}}$ which is necessary in
stochastic stability analysis and control design of the It\^{o}
stochastic T-S model in  (\ref{a1}). For an admissible control $u$ and an initial state
$x_0$, the trajectory $x(t)$ of  system (\ref{a1}) is
vector-valued stochastic processes which can be viewed as a mapping
$x=\left(x_{1},\ldots,x_{n}\right)^{\tau}: \mathbb{R}_{+} \times
\Omega \rightarrow \mathbb{R}^{n}$ with $
\mathbb{R}_{+}:=[0,\infty)$. Since the range of the  mapping (or
system state) $x$ is $ \mathbb{R}^{n}$, the antecedent parts that
$x_j$ is $ F_j^{\alpha_{ij}}$ in the  $i$th plant rule make
sense. Also $x(t)$ can be regarded  as the observation of
vector-valued stochastic processes as pointed out in
\cite{gfwq2013}.

\indent~~  This paper intends to develop  a line integral Lyapunov
function approach to stability analysis and stabilization for  the
It\^{o} stochastic  T-S model
(\ref{a1}). We first adopt the stochastic stability concept and the main
tool of stability analysis for very general It\^{o}  stochastic
dynamic systems. Consider an It\^{o} stochastic differential equation on $R^{n}$ of
the form
{\setlength\abovedisplayskip{6pt}
\setlength\belowdisplayskip{5pt}
\begin{equation}
dx=f(x,t)dt+g(x,t)dW(t)\label{b51}
\end{equation}}
\noindent where $f: \mathbb{R}^{n}\times \mathbb{R}_{+}\rightarrow
\mathbb{R}^{n} $, $g: \mathbb{R}^{n}\times \mathbb{R}_{+}
\rightarrow \mathbb{R}^{n}$ satisfy the usual linear growth and
local Lipschitz conditions for existence and uniqueness of solutions
to (\ref{b51}), $W(t)$ is a Wiener process adapted to a filtration
$\{\mathcal{F}_{t}\}_{t\geq0}$  that contains all $\mathcal{P}$-null sets and is right
continuous. Let $ x=0$ be the equilibrium point of (\ref{b51}),
that is, $ f(0,t)=0$, $g(0,t)=0$ and $x(t;t_{0},x_{0})$ be a
trajectory of  system (\ref{b51}) with
initial value $x(t_0)=x_0 \in \mathbbm{R}^n$.

%\textbf{Definition 1 }
 \begin{definition}
 \cite{XMao1997} The equilibrium
point $x=0 $ of equation (\ref{b51}) is said to be
\begin{description}
 \item[\hspace{0.15cm}i)] stochastically stable or stable in probability if for every pair of $\epsilon\in(0,1)$ and $r>0$, there exists a $\delta=\delta(\epsilon,r,t_{0})>0$ such that $\mathcal{P}\{|x(t;t_{0},x_{0})|<r, \ for \  all \  t\geq t_{0}\}\geq 1-\epsilon$ whenever $|x_{0}|<\delta$. Otherwise, it is said to be stochastically unstable.
\item[\hspace{0.15cm}ii)] stochastically asymptotically stable if it is stochastically stable and, moreover, for every $\epsilon\in(0,1)$, there exists a $\delta_0=\delta_0(\epsilon,t_{0})>0$ such that $\mathcal{P}\{\lim\limits_{t\rightarrow
\infty}x(t;t_{0},x_{0})=0\}\geq 1-\epsilon$ whenever
$|x_{0}|<\delta_{0}$.
\item[\hspace{0.15cm}iii)] stochastically asymptotically stable in the large if it is stable and, moreover,
for all $x_{0}\in \mathbb{R}^{n}$, $\mathcal{P}
\{\lim\limits_{t\rightarrow \infty}x(t;t_{0},x_{0})=0\}=1.$
\end{description}
\end{definition}
Now, we are in a position to introduce the line integral
Lyapunov function that was first proposed in \cite{RW2006}. A line integral  function   $V:R^n\rightarrow[0, \infty)$ is defined by
\begin{eqnarray}
V(x) \!=\!2\int_{\Gamma(0,x)}\bar{f}^{\tau}(\Psi) d\Psi\!:=\!2\int_{\Gamma(0,x)}\sum_{i\!=\!1}^{s}h_{i}(\Psi)\Psi^{\tau}P_{i} d\Psi\label{ea4}
\end{eqnarray}
\begin{eqnarray}
P_{i}&=&\bar{P}+D_{i}>0 \label{matrix}\\
\bar{P} & =& %\setlength{\arraycolsep}{1.5pt}
\left[
\begin{array}{cccc}
0 & p_{12} & \cdots & p_{1n} \\
p_{12} & 0 & \cdots & p_{2n} \\
\vdots & \vdots & \ddots & \vdots\\
p_{1n} & p_{2n} & \cdots & 0
\end{array}
\right] \label{a3-}\\
  D_{i}& =&%\setlength{\arraycolsep}{1pt}
\left[
\begin{array}{ccc}
d_{11}^{\alpha_{i1}} & \cdots & 0 \\
\vdots  & \ddots & \vdots \\
0  & \cdots & d_{nn}^{\alpha_{in}}
\end{array}
\right]\label{a3}
\end{eqnarray}
where $\Gamma(0,x)$ denotes a path from the origin to the current
state $x$; $\Psi\in\mathbb{R}^{n}$ is a dummy vector for the
integral and $d\Psi$ is an infinitesimal displacement vector.
To illustrate that $V(x)$ defined in (\ref{ea4})
 is a Lyapunov function, the following  fact is adopted from \cite{RW2006}.

\begin{fac} The line integral  $V(x)$ defined by (\ref{ea4}) is path-independent,  continuously differentiable, positive definite and also satisfies
\[V(x)\rightarrow \infty, \  \ ||x||\rightarrow \infty.\]
\end{fac}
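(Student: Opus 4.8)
The plan is to establish the four properties in the order in which they build on one another: path-independence first, since this is what makes $V$ a well-defined function; then continuous differentiability; and finally positive definiteness together with radial unboundedness.

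For path-independence, note that $\Gamma(0,x)$ ranges over $\mathbb{R}^{n}$, which is simply connected, so by the classical criterion (symmetry of the Jacobian on a simply connected domain) it suffices to show that the Jacobian of the integrand $\bar f(\Psi)=\sum_{i=1}^{s}h_{i}(\Psi)P_{i}\Psi$ is symmetric. The two structural features to exploit are the separability of the basis functions, $h_{i}(\Psi)=\prod_{j=1}^{n}\mu_{j}^{\alpha_{ij}}(\Psi_{j})$ from (\ref{small012}), and the splitting $P_{i}=\bar P+D_{i}$ from (\ref{matrix}), in which $\bar P$ has a zero diagonal and $D_{i}$ is diagonal with entries $d_{jj}^{\alpha_{ij}}$ depending only on the $j$-th premise. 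Writing out the $k$-th component of $\bar f$ and using $\sum_{i}h_{i}\equiv1$ on the (index-independent) off-diagonal part of $P_{i}$ gives
\[
\bar f_{k}(\Psi)=\sum_{l\neq k}p_{kl}\Psi_{l}+\Psi_{k}\sum_{i=1}^{s}h_{i}(\Psi)\,d_{kk}^{\alpha_{ik}}.
\]
The crucial step is that, under the premise structure of the model — the fuzzy rules exhausting all combinations of the coordinatewise fuzzy sets — grouping the rules according to the value $r\in\{1,\ldots,s_{k}\}$ of their $k$-th premise index and invoking $\sum_{r=1}^{s_{j}}\mu_{j}^{r}(\Psi_{j})=1$ from (\ref{small0}) for every $j\neq k$ collapses the last sum to a function of $\Psi_{k}$ alone:
\[
\sum_{i=1}^{s}h_{i}(\Psi)\,d_{kk}^{\alpha_{ik}}=\sum_{r=1}^{s_{k}}\mu_{k}^{r}(\Psi_{k})\,d_{kk}^{r}=:\phi_{k}(\Psi_{k}).
\]
Hence $\bar f_{k}(\Psi)=\sum_{l\neq k}p_{kl}\Psi_{l}+\Psi_{k}\phi_{k}(\Psi_{k})$, so for $m\neq k$ one gets $\partial\bar f_{k}/\partial\Psi_{m}=p_{km}=p_{mk}=\partial\bar f_{m}/\partial\Psi_{k}$; the Jacobian is symmetric and the integral in (\ref{ea4}) is path-independent.

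Once this is known, $V$ is the potential of $2\bar f$, i.e.\ $\nabla V=2\bar f$, so continuous differentiability is immediate from continuity of $\bar f$ — equivalently of the membership functions, equivalently of each $\phi_{k}$. Integrating (\ref{ea4}) along the piecewise-linear path through the coordinate axes, on whose $k$-th leg only $\Psi_{k}$ varies and $\Psi_{l}=0$ for $l>k$, moreover yields the closed form
\[
V(x)=x^{\tau}\bar P x+2\sum_{k=1}^{n}\int_{0}^{x_{k}}\xi\,\phi_{k}(\xi)\,d\xi,
\]
which makes the $C^{1}$ regularity transparent as well. For positive definiteness and radial unboundedness I would instead parametrize the (now path-independent) integral along the ray $\Psi(t)=tx$, $t\in[0,1]$, and use $P_{i}>0$ from (\ref{matrix}):
\[
V(x)=2\int_{0}^{1}\bar f(tx)^{\tau}x\,dt=2\int_{0}^{1}t\left(\sum_{i=1}^{s}h_{i}(tx)\,x^{\tau}P_{i}x\right)dt\ \ge\ \left(\min_{i}\lambda_{\min}(P_{i})\right)\|x\|^{2},
\]
using that each $P_{i}$ is symmetric and $\sum_{i}h_{i}\equiv1$. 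This gives $V(0)=0$, $V(x)>0$ for $x\neq0$, and $V(x)\to\infty$ as $\|x\|\to\infty$.

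The main obstacle is the collapse of $\sum_{i}h_{i}(\Psi)\,d_{kk}^{\alpha_{ik}}$ to the single-variable function $\phi_{k}(\Psi_{k})$: this is exactly the point at which the special premise structure of the T-S model is needed, and keeping the per-coordinate indexing of the fuzzy sets straight during the re-summation is the only genuinely delicate bookkeeping; the differentiability and positivity claims are then routine. Since the statement is quoted from \cite{RW2006}, one could alternatively just cite it, but the reasoning above is what underlies it.
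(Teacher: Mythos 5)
Your proof is correct and is essentially the Rhee--Won argument; the paper itself offers no proof of Fact 1 (it simply adopts the statement from \cite{RW2006}), so there is nothing internal to compare against, and your reconstruction --- Jacobian symmetry on a simply connected domain for path-independence, the ray parametrization with $\sum_i h_i\equiv 1$ and $P_i>0$ for positive definiteness and radial unboundedness --- is the standard one underlying that citation. The one point worth making explicit is that the collapse of $\sum_{i=1}^{s}h_{i}(\Psi)\,d_{kk}^{\alpha_{ik}}$ to a function of $\Psi_k$ alone (and indeed the normalization $\sum_{i}h_{i}\equiv 1$ itself, given the product form (\ref{small012})) requires the rule base to exhaust all combinations of the coordinatewise fuzzy sets, i.e.\ $s=\prod_{j}s_{j}$ with $i\mapsto(\alpha_{i1},\ldots,\alpha_{in})$ a bijection; this is precisely the ``specific premise structure'' the introduction attributes to \cite{RW2006}, and it is implicit rather than stated in the paper's model, so you are right to flag it as the only genuinely delicate step.
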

\noindent It can be also seen from above fact and (\ref{ea4}) that the line
integral  Lyapunov function $V(x)$ is a
non-quadratic one.
The following simple  fact  reveals the
relationship between line integral Lyapunov function and
quadratic Lyapunov function.
\begin{fac} For the line integral Lyapunov function  $V(x)$ in
(\ref{ea4}), if there exists a positive definite matrix
$P$, such that the matrices $\{P_i\}_{i\in\mathcal{S}}$ satisfy
$P_{1}=\cdots=P_{s}=P$, then  $V(x)$ becomes
a  quadratic function $x^{\tau}Px$, that is, for any
$x\in \mathbbm{R}^n$
\[
V(x)=2\int_{\Gamma(0,x)}\Psi^{\tau}Pd\Psi=x^{\tau}Px.
\]
\end{fac}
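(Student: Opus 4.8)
The plan is to exploit the convex-combination identity $\sum_{i=1}^{s}h_{i}(\Psi)=1$ from (\ref{small012}) together with the hypothesis $P_{1}=\cdots=P_{s}=P$ to collapse the integrand of (\ref{ea4}) to a purely linear vector field. First I would write, for every $\Psi\in\mathbb{R}^{n}$,
\[
\bar{f}^{\tau}(\Psi)=\sum_{i=1}^{s}h_{i}(\Psi)\Psi^{\tau}P_{i}=\Psi^{\tau}P\sum_{i=1}^{s}h_{i}(\Psi)=\Psi^{\tau}P,
\]
so that $V(x)=2\int_{\Gamma(0,x)}\Psi^{\tau}P\,d\Psi$, which is exactly the reduced form displayed in the statement.

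Next I would observe that, because $P$ is symmetric, the differential one-form $\Psi^{\tau}P\,d\Psi$ is exact; indeed $d(\Psi^{\tau}P\Psi)=(d\Psi)^{\tau}P\Psi+\Psi^{\tau}P\,d\Psi=\Psi^{\tau}(P+P^{\tau})\,d\Psi=2\Psi^{\tau}P\,d\Psi$. Hence the line integral of $\Psi^{\tau}P\,d\Psi$ along any path $\Gamma(0,x)$ from the origin to $x$ equals $\tfrac12\bigl(x^{\tau}Px-0^{\tau}P0\bigr)=\tfrac12 x^{\tau}Px$, which is independent of the choice of path and is therefore consistent with the path-independence already guaranteed by Fact 1. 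Multiplying by the leading factor $2$ in (\ref{ea4}) yields $V(x)=x^{\tau}Px$, as claimed. As an alternative to invoking exactness, one may simply specialize the path to the straight segment $\Psi(t)=tx$, $t\in[0,1]$, for which $d\Psi=x\,dt$, and compute $2\int_{0}^{1}(tx)^{\tau}Px\,dt=2\cdot\tfrac12 x^{\tau}Px=x^{\tau}Px$; path-independence from Fact 1 then shows the same value is obtained along every admissible path.

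There is no substantial obstacle in this argument; the only two points that need care are (i) that the reduction $\sum_{i}h_{i}(\Psi)\Psi^{\tau}P_{i}=\Psi^{\tau}P$ genuinely uses both the assumption $P_{i}\equiv P$ and the normalization $\sum_{i}h_{i}\equiv1$, and (ii) that the evaluation of the line integral relies on the symmetry of $P$ built into (\ref{matrix})--(\ref{a3}) (any skew-symmetric part of $P$ would contribute nothing to the integral over a path from $0$ to $x$, but symmetry is what makes the clean identity $V(x)=x^{\tau}Px$ hold). Positive definiteness of $P$ is not needed for the identity itself, only to tie $V$ back to being a genuine Lyapunov function.
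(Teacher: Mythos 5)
Your proof is correct. The paper states Fact 2 without any explicit proof --- the displayed chain of equalities is itself the intended argument --- and your derivation (collapsing the integrand via $\sum_{i}h_{i}(\Psi)\equiv 1$ together with $P_{i}\equiv P$, then evaluating the exact one-form $2\Psi^{\tau}P\,d\Psi=d(\Psi^{\tau}P\Psi)$, or equivalently integrating along the straight segment and invoking the path-independence of Fact 1) is precisely the computation the paper leaves implicit.
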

For stability analysis of system (\ref{a1}) based on line
integral Lyapunov function $V(x)$ in (\ref{ea4}), the row gradient vector
${\partial
 V(x)}/{\partial x^{\tau}}$  is
needed. Due to the path independence of the line
integral, ${\partial
 V(x)}/{\partial x^{\tau}}$  can be calculated by the following Fact $3$.
\begin{fac} Consider the line integral Lyapunov function  $V(x)$ in
(\ref{ea4}). The row gradient vector
${\partial
 V(x)}/{\partial x^{\tau}}$ is calculated by
\begin{equation}
\frac{\partial
 V(x)}{\partial x^{\tau}}=2x^{\tau}P(x):=2\sum_{i=1}^{s}h_{i}(x)x^{\tau} P_{i}.  \label{zhzh}
\end{equation}
\end{fac}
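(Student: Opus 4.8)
The plan is to obtain Fact~3 directly from the path independence asserted in Fact~1, by way of the fundamental theorem for line integrals. Write $\bar f(\Psi)=\sum_{i=1}^{s}h_i(\Psi)P_i\Psi$, so that, using the symmetry of each $P_i=\bar P+D_i$, the integrand in (\ref{ea4}) is $\bar f^{\tau}(\Psi)=\sum_{i=1}^{s}h_i(\Psi)\Psi^{\tau}P_i$. Because the line integral in (\ref{ea4}) is path independent, $V$ is a well defined scalar function on $\mathbb{R}^{n}$ and $\bar f$ is a conservative field having $\tfrac12 V$ as a potential (up to an additive constant); hence $\partial V(x)/\partial x^{\tau}=2\bar f^{\tau}(x)$, which is exactly (\ref{zhzh}) after substituting $\Psi=x$ and writing $\sum_i h_i(x)P_i=P(x)$.

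To keep the argument self-contained I would verify the gradient formula componentwise. Fix $x\in\mathbb{R}^{n}$, an index $k\in\mathcal{N}$, a small scalar $\epsilon$, and let $e_k$ be the $k$th standard basis vector. By path independence, $V(x+\epsilon e_k)$ may be evaluated along the concatenation of the path used for $V(x)$ (from $0$ to $x$) with the straight segment $\Psi=x+t e_k$, $t\in[0,\epsilon]$. Subtracting $V(x)$ cancels the common part and leaves only the contribution of the segment, on which $d\Psi=e_k\,dt$, so
\[
V(x+\epsilon e_k)-V(x)=2\int_{0}^{\epsilon}\bar f^{\tau}(x+t e_k)e_k\,dt=2\int_{0}^{\epsilon}\bar f_k(x+t e_k)\,dt ,
\]
where $\bar f_k$ is the $k$th component of $\bar f$. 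Dividing by $\epsilon$ and letting $\epsilon\to0$, the continuity of the normalized membership functions (hence of each $h_i$ and of $\bar f$) gives $\partial V(x)/\partial x_k=2\bar f_k(x)$. Assembling the $n$ partials into the row gradient yields
\[
\frac{\partial V(x)}{\partial x^{\tau}}=2\bar f^{\tau}(x)=2\sum_{i=1}^{s}h_i(x)x^{\tau}P_i=2x^{\tau}P(x).
\]

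The only delicate point is the very first step: the whole computation relies on being free to append the auxiliary straight segment without regard to how $V(x)$ itself was parametrized, and this is permissible precisely because of the path independence in Fact~1 (which ultimately rests on the zero diagonal of $\bar P$ in (\ref{a3-}) together with the diagonal form of $D_i$ in (\ref{a3})). Everything after that is the routine differentiation of a path integral, so I expect no further obstacle.
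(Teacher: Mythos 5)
Your proposal is correct and follows exactly the route the paper intends: the paper states Fact 3 without proof, attributing it to the path independence established in Fact 1 (following Rhee and Won), and your componentwise argument via an appended straight segment is just the standard fundamental-theorem-for-line-integrals computation that makes this explicit. No gaps; the use of the symmetry of $P_i=\bar P+D_i$ to identify the integrand with $\bar f^{\tau}$ and the continuity of the $h_i$ to pass to the limit are both legitimate.
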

The line integral Lyapunov function based stability analysis of the
  model in (\ref{a1}) is involved  in the
Hessian matrix $\frac{\partial^2
 V(x)}{\partial x \partial x^{\tau}}$ of  $V(x)$ in (\ref{ea4}) which is related to  partial derivatives of the fuzzy basis
 functions $ \{h_{i}(x)\}^s_{i=1}$. This also leads to the problem of handling
derivatives of membership functions (MFs) \cite{gb2012,KTanaka2003,BZhang2007}.  In what follows, we give a new method
to deal with the derivatives of MFs.
Consider the normalized membership functions
$\left\{\mu_j^{\alpha_{ij}}(x_j), (i, j)\in\mathcal{S}\times \mathcal{N}\right\}$ defined by (\ref{normalization}). Notice
that for a fixed pair $(i, j)\in\mathcal{S}\times \mathcal{N}$, the
function $\mu_j^{\alpha_{ij}}(x_j)$ is a scalar-valued one with only one
argument $x_j$. So we can give an assumption for the derivatives of
these functions. The assumption will form our starting point for
stochastic stability analysis of the model (\ref{a1}).
%\begin{align}
%\left|
%\begin{array}{c}
%   x_j\frac{d}{dx_j}\bigg(\mu_{j}^{\alpha_{ij}}(x_j)\bigg)
% \end{array}
% \right|
%\end{align}
%\textbf{Assumption 1}
\begin{assumption}
Assume for any $(i,j)\in
\mathcal{S}\times \mathcal{N}$ and $x_j \in \mathbb{R}$, that the
normalized membership function $\mu_{j}^{\alpha_{ij}}(x_j):
\mathbb{R}\mapsto [0,1]$ is differentiable and that there exists a
known  constant $\beta_{ij}>0$ satisfying
\begin{equation}
\bigg| x_j\frac{d}{dx_j}\left(\mu_{j}^{\alpha_{ij}}(x_j)\right)\bigg|\leq \beta_{ij}.
\label{mu-b}
\end{equation}
\end{assumption}
\begin{remark} It is important to point out that normalized membership functions
$\mu_j^{\alpha_{ij}}(x_j), i\in\mathcal{S}, j\in\mathcal{N}$
defined by (\ref{normalization}) are  state $x_j$
dependent and not the trajectory $x_j(t)$ dependent.  Therefore, the
derivative $\frac{d}{dx_j}(\mu_{j}^{\alpha_{ij}}(x_j))$ is  the
derivative of a scalar function  with respect to its one
argument $x_j$. If the membership functions are regarded as
trajectory dependent and their derivatives are  regarded as
time derivatives, the time derivatives of fuzzy basis functions will
be  related to the vector field that governs the T-S model. Such a case
makes the time derivatives of fuzzy basis functions   difficult
to handle \cite{gb2012}. So Assumption $1$ is different
from the ones in \cite {lk2014,KTanaka2003,BZhang2007} where
upper bounds of time derivatives of fuzzy basis functions are
required.
\end{remark}
\begin{remark} Assumption 1 implies that not only all the normalized membership functions $\mu_j^{\alpha_{ij}}(x_j) $ are differentiable, but also they should satisfy
 \begin{eqnarray*}
 -\frac{\beta_{ij}}{x_j}&\leq& \frac{d}{dx_j}(\mu_{j}^{\alpha_{ij}}(x_j))\leq  \frac{\beta_{ij}}{x_j},\ \  x_j>0\\
 \frac{\beta_{ij}}{x_j}&\leq&\frac{d}{dx_j}(\mu_{j}^{\alpha_{ij}}(x_j)) \leq  -\frac{\beta_{ij}}{x_j}, \ x_j<0.
\end{eqnarray*}
%\textcolor{blue}
%{\begin{equation*}
%\frac{d}{dx_j}(\mu_{j}^{\alpha_{ij}}(x_j))=
%\left\{
%\begin{array}{l}
%\begin{array}{c}
%    \leq  \frac{\beta_{ij}}{x_j}
%\end{array}
%,\ \  x_j>0 \\
%\begin{array}{cc}
%  \geq - \frac{\beta_{ij}}{-x_j}
%\end{array}, x_j<0
%  \end{array}
% \right.
%\end{equation*}}
%So if the membership functions of the involved fuzzy sets are not differentiable or do not satisfy the above requirements, the approach proposed in this paper cannot be applied. However,
Assumption $1$ for  the normalized membership function
$\mu_{j}^{\alpha_{ij}}(x_j)$ seems  to be reasonable and
the  upper bound of inequalities (\ref{mu-b}) can be fulfilled
for the exponential type of membership functions
\cite{zhou05a} (See  Example $2$  in Section $5$). Fig. 1 shows the region of $\frac{d}{dx_j}(\mu_{j}^{\alpha_{ij}}(x_j))$ satisfying Assumption $1$ with $\beta_{ij}=0.0125$.
\end{remark}
\vspace{-0.3cm}
\begin{figure}[H]
\begin{minipage}{7.2cm}
     \includegraphics[width=3.5in,height=1.6 in]{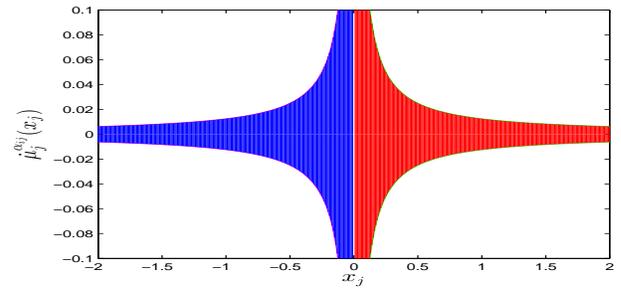}
\end{minipage}
\caption{Region of $\frac{d}{dx_j}(\mu_{j}^{\alpha_{ij}}(x_j))$ satisfying Assumption $1$}
\end{figure}
\vspace{-0.25cm}
It can be shown that the fuzzy basis functions $ \{h_{i}(x)\}^s_{i=1}$ are differentiable. An estimate
for all the partial derivatives of these fuzzy basis functions $\{h_{i}(x)\}^s_{i=1}$ can be derived by (\ref{mu-b}), and  will be
used later on.
\begin{lemma}\label{loyf1314}
Under Assumption $1$, the fuzzy basis functions $ \{h_{i}(x)\}^s_{i=1}$ given by
(\ref{small012}) are differentiable and satisfy
\begin{equation}
\left| x_{j}\frac{\partial
 h_{i}(x)}{\partial x_{j}}\right| \leq \beta_{ij},\ \ \ \ \ i\in \mathcal{S},\ \ \ j\in \mathcal{N}.\label{xjpar}
\end{equation}
\end{lemma}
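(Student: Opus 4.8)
The plan is to exploit the product structure of the fuzzy basis function $h_i(x)=\prod_{j=1}^{n}\mu_{j}^{\alpha_{ij}}(x_j)$ in (\ref{small012}), in which each factor depends on a single, distinct coordinate of $x$. First I would note that differentiability is immediate: by Assumption 1 every factor $\mu_{k}^{\alpha_{ik}}(x_k)$ is differentiable with respect to its own argument $x_k$, and a finite product of functions that are each differentiable in a separate variable is differentiable on $\mathbb{R}^{n}$; moreover the partial derivative with respect to $x_j$ is computed by differentiating only the $j$-th factor, since the others do not involve $x_j$.

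Carrying this out, for a fixed $i\in\mathcal{S}$ and $j\in\mathcal{N}$ one obtains
\[
\frac{\partial h_i(x)}{\partial x_j}=\left(\frac{d}{dx_j}\mu_{j}^{\alpha_{ij}}(x_j)\right)\prod_{k\ne j}\mu_{k}^{\alpha_{ik}}(x_k),
\]
and hence, after multiplying by $x_j$ and taking absolute values,
\[
\left|x_j\frac{\partial h_i(x)}{\partial x_j}\right|=\left|x_j\frac{d}{dx_j}\mu_{j}^{\alpha_{ij}}(x_j)\right|\prod_{k\ne j}\left|\mu_{k}^{\alpha_{ik}}(x_k)\right|.
\]

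To finish, I would combine the two bounds that are already at hand: Assumption 1, specifically (\ref{mu-b}), controls the first factor by $\beta_{ij}$, while the normalization property (\ref{small0}) guarantees $0\le\mu_{k}^{\alpha_{ik}}(x_k)\le 1$ for every $k$, so the remaining product is at most $1$. This yields $\left|x_j\,\partial h_i(x)/\partial x_j\right|\le\beta_{ij}$, and the estimate also holds trivially wherever $x_j=0$ because the left-hand side then vanishes. There is no genuine obstacle in this argument; the only point that deserves a careful sentence is the justification of termwise differentiation of the product of single-variable factors, and the observation that the bound is \emph{uniform} in $x$ precisely because the auxiliary factors are membership weights bounded by one.
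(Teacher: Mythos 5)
Your argument is correct and coincides with the paper's own proof: both differentiate only the $j$-th factor of the product in (\ref{small012}), bound the remaining factors by $1$ using (\ref{small0}), and then apply (\ref{mu-b}). No further comment is needed.
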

\begin{proof}
By virtue of (\ref{small0}) and (\ref{small012}), one has
\begin{align}
&\left| x_{j}\frac{\partial h_{i}(x)}{\partial x_{j}}\right|=\bigg| x_{j}\frac{\partial \left(\prod\limits_{k=1}^{n} \mu_{k}^{\alpha_{ik}}(x_k)\right)}{\partial x_{j}}\bigg|= \nonumber\\
&\bigg| x_{j}\prod\limits_{k\neq j}^{n} \mu_{k}^{\alpha_{ik}}(x_k)\frac{d}{dx_j}\left(\mu_{j}^{\alpha_{ij}}(x_j)\right)\bigg| \leq \left| x_{j}\frac{d}{dx_j}\left(\mu_{j}^{\alpha_{ij}}(x_j)\right)\right|   \nonumber
\end{align}
 which together with (\ref{mu-b}) leads to the inequalities (\ref{xjpar}).
\end{proof}
Now, we give the new results of Hessian matrix for the line integral Lyapunov function $V(x)$ in (\ref{ea4}) which will be adopted in the following section.
\begin{lemma}\label{loyf13141}
Consider the fuzzy basis functions $ \{h_{i}(x)\}^s_{i=1}$ in
(\ref{small012}) with the normalized membership functions $\mu_{j}^{\alpha_{ij}}(x_j)$ satisfying (\ref{mu-b}) in Assumption $1$.  For matrices $D_i>0$ defined in (\ref{a3}), a positive definite matrix $D$  with $D-D_{i}\geq 0$ and $P(x)$ defined in (\ref{zhzh}),   the following inequalities with respect to Hessian matrix hold for $x, y \in \mathbb{R}^{n}$
\begin{eqnarray}
y^\tau\bigg(\frac{\partial^2
 	V(x)}{\partial x \partial x^{\tau}}\bigg) y&=&y^\tau\bigg(P(x)\!+\!\sum\limits^s_{i=1}\frac{\partial h_{i}(x)}{\partial x}x^{\tau}D_i\bigg)y\nonumber\\
&\leq& y^\tau\bigg(P(x)\!+\!\beta D\bigg) y \label{ypath2}\\
 \beta&=&\sum\limits^n_{j=1}\sum\limits^s_{i=1}\beta_{ij} \label{bound2}
\end{eqnarray}
with the known constants $\beta_{ij}>0$ given in (\ref{mu-b}).
\end{lemma}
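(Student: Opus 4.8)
The plan is to prove the displayed equality first and then the inequality. For the equality I would differentiate the row-gradient in (\ref{zhzh}) once more with respect to $x^{\tau}$. Using $P_{i}=\bar{P}+D_{i}$ with $\bar{P}$ a constant symmetric matrix and $\sum_{i=1}^{s}h_{i}(x)\equiv1$ (so that $\sum_{i}\partial h_{i}(x)/\partial x^{\tau}=0$), the product rule sends $\sum_{i}h_{i}(x)P_{i}x$ to the sum of $\sum_{i}h_{i}(x)P_{i}=P(x)$ and $\sum_{i}D_{i}x\,\partial h_{i}(x)/\partial x^{\tau}$, the $\bar{P}$-part of the derivative-of-$h_{i}$ term cancelling. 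Since $y^{\tau}\big(D_{i}x\,\partial h_{i}(x)/\partial x^{\tau}\big)y=\big(y^{\tau}D_{i}x\big)\big(\partial h_{i}(x)/\partial x^{\tau}\,y\big)$ is a product of two scalars, it equals $y^{\tau}\big(\partial h_{i}(x)/\partial x\,x^{\tau}D_{i}\big)y$, which gives the equality in (\ref{ypath2}).

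For the inequality, the idea is to exploit that each state-dependent matrix $\partial h_{i}(x)/\partial x\,x^{\tau}D_{i}$ has rank one and to expand the quadratic form coordinatewise: writing $d_{jj}^{\alpha_{ij}}$ for the diagonal entries of $D_{i}$ in (\ref{a3}), one has $y^{\tau}\big(\sum_{i}\partial h_{i}(x)/\partial x\,x^{\tau}D_{i}\big)y=\sum_{j=1}^{n}\sum_{l=1}^{n}y_{j}y_{l}x_{l}\sum_{i=1}^{s}d_{ll}^{\alpha_{il}}\,\partial h_{i}(x)/\partial x_{j}$. The crucial point, and the step I expect to be the main obstacle, is that all off-diagonal contributions ($j\neq l$) vanish: inserting $h_{i}(x)=\prod_{k=1}^{n}\mu_{k}^{\alpha_{ik}}(x_{k})$ from (\ref{small012}) and summing over the rules, the inner sum factorizes and carries the factor $\sum_{\alpha_{ij}=1}^{s_{j}}\frac{d}{dx_{j}}\mu_{j}^{\alpha_{ij}}(x_{j})=\frac{d}{dx_{j}}\big(\sum_{\alpha_{ij}=1}^{s_{j}}\mu_{j}^{\alpha_{ij}}(x_{j})\big)=\frac{d}{dx_{j}}(1)=0$ by (\ref{small0}); this is exactly where the special premise structure behind the path-independence of $V(x)$ must be used. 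Hence only the $j=l$ terms survive and the quadratic form collapses to $\sum_{i=1}^{s}\sum_{j=1}^{n}d_{jj}^{\alpha_{ij}}\big(x_{j}\,\partial h_{i}(x)/\partial x_{j}\big)y_{j}^{2}$.

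The remaining estimation is routine. By Lemma~\ref{loyf1314} we have $x_{j}\,\partial h_{i}(x)/\partial x_{j}\leq\beta_{ij}$, and $D_{i}>0$ in (\ref{matrix}) forces $d_{jj}^{\alpha_{ij}}>0$, so each summand obeys $d_{jj}^{\alpha_{ij}}\big(x_{j}\,\partial h_{i}(x)/\partial x_{j}\big)y_{j}^{2}\leq\beta_{ij}\,d_{jj}^{\alpha_{ij}}y_{j}^{2}\leq\beta_{ij}\,y^{\tau}D_{i}y\leq\beta_{ij}\,y^{\tau}Dy$, where the second inequality holds because $d_{jj}^{\alpha_{ij}}y_{j}^{2}$ is a single nonnegative term of $y^{\tau}D_{i}y=\sum_{k=1}^{n}d_{kk}^{\alpha_{ik}}y_{k}^{2}$ and the third uses $D-D_{i}\geq0$. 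Summing over $i\in\mathcal{S}$ and $j\in\mathcal{N}$ gives $y^{\tau}\big(\sum_{i}\partial h_{i}(x)/\partial x\,x^{\tau}D_{i}\big)y\leq\big(\sum_{j=1}^{n}\sum_{i=1}^{s}\beta_{ij}\big)y^{\tau}Dy=\beta\,y^{\tau}Dy$ with $\beta$ as in (\ref{bound2}), and combining this with the equality established above yields (\ref{ypath2}).
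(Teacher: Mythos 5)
Your proof is correct, but the way you establish the inequality is genuinely different from the paper's. The equality part coincides in substance with the paper's calculation (the paper differentiates $2x^{\tau}P(x)$ entrywise; you use the product rule plus $\sum_i\partial h_i/\partial x=0$ to kill the $\bar P$ contribution --- same content). For the bound, the paper works rule by rule: it observes that $D_i^{-1/2}\frac{\partial h_{i}}{\partial x}x^{\tau}D_i^{1/2}$ has rank at most one, replaces it by a similar diagonal matrix whose single nonzero entry is its trace $x^{\tau}\frac{\partial h_{i}}{\partial x}$, and thereby bounds $y^{\tau}\frac{\partial h_{i}}{\partial x}x^{\tau}D_iy$ by $\left|x^{\tau}\frac{\partial h_{i}}{\partial x}\right|y^{\tau}D_iy\leq\sum_j\beta_{ij}\,y^{\tau}D_iy$ before summing over $i$. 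You instead expand the quadratic form of the \emph{summed} term coordinatewise and show that all off-diagonal contributions cancel because $\sum_{\alpha_{ij}=1}^{s_j}\frac{d}{dx_j}\mu_{j}^{\alpha_{ij}}(x_j)=0$, leaving the purely diagonal expression $\sum_{i,j}d_{jj}^{\alpha_{ij}}\big(x_j\partial h_i/\partial x_j\big)y_j^2$, which Lemma 1, $D_i>0$ and $D-D_i\geq0$ bound directly. Two remarks. First, your cancellation requires the rule base to be the full Cartesian product of the per-variable fuzzy sets, so that the sum over $i$ factorizes over $j$; the paper assumes this only implicitly, through $\sum_ih_i\equiv1$ and the Rhee--Won premise structure underlying Fact 1, so you should state it explicitly. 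Second, your route is more elementary and in fact sharper: it identifies the perturbation term exactly as a diagonal quadratic form rather than merely majorizing each rank-one summand, and it sidesteps the delicate step in the paper where $\big(P^{-\tau}D_i^{1/2}y\big)^{\tau}\mathrm{diag}\{t,0,\dots,0\}\big(PD_i^{1/2}y\big)$ is bounded by $|t|$ times $\big(P^{-\tau}D_i^{1/2}y\big)^{\tau}\big(PD_i^{1/2}y\big)$, a step that needs more justification than the paper supplies since the two vectors flanking the diagonal matrix are not equal.
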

 \emph{\text{ \indent Proof:\ }}
For the $V(x)$ defined in (\ref{ea4}),  in terms of
Fact $3$ We have
\begin{equation}
\frac{\partial
	V(x)}{\partial t}=0,\ \ \ \ \ \ \  \frac{\partial
	V(x)}{\partial x^{\tau}}=2x^{\tau}P(x)\label{aa3}.
\end{equation}
%Letting $D=diag\{d_1,\ \cdots,\ d_n\}$ and
%using the condition $D-D_{j}\geq 0$, we can
%conclude that $d_{jj}^{\alpha_{ij}}\leq d_j$.
On the other hand, recalling (\ref{matrix})-(\ref{a3}) gives
\begin{align}
&2x^{\tau}P(x)=2\left(x_{1},\ldots,x_{j},\ldots,x_{n}\right)\times \nonumber\\
%\end{align}
%\begin{align}
&\left[\begin{array}{ccccccc}
\sum\limits_{i=1}^{s}h_{i}d_{11}^{\alpha_{i1}} & \cdots & p_{1j} & \cdots & p_{1n} \\
\vdots &  &\vdots &  &\vdots\\
p_{j1} & \cdots  & \sum\limits_{i=1}^{s}h_{i}d_{jj}^{\alpha_{ij}} & \cdots  & p_{jn}\\
\vdots &  & \vdots &  & \vdots\\
p_{n1} & \cdots  & p_{nj} & \cdots &
\sum\limits_{i=1}^{s}h_{i}d_{nn}^{\alpha_{in}}
\end{array}
\right]\nonumber\\
%\end{align}
%\begin{align}
&=2\bigg(x_{1}\sum_{i=1}^{s}h_{i}d_{11}^{\alpha_{i1}}\!+\!\sum_{k=2}^{n}x_{k}p_{k1},
\ldots , \sum_{k\neq
	j}x_{k}p_{kj}+x_{j} \sum_{i=1}^{s} \nonumber\\
%\end{align}
%\begin{align}
&h_{i} d_{jj}^{\alpha_{ij}},\cdots
,\sum_{k=1}^{n-1} x_{k}p_{kn}\!+\! x_{n} \sum_{i=1}^{s} h_{i}d_{nn}^{\alpha_{in}}\bigg).\label{1111w}
\end{align}
By (\ref{1111w}), the Hessian matrix of the line integral Lyapunov function $V(x)$ can be calculated by
 \begin{align}
 &\frac{\partial^2
 	V(x)}{\partial x \partial x^{\tau}}=\frac{\partial
 }{\partial x}\left[\frac{\partial
 	V(x)}{\partial x^{\tau}}\right]=
 2\frac{\partial
 	\left[x^{\tau}P(x)\right]}{\partial x}\nonumber\\
 & =2\partial\bigg(x_{1}\sum_{i=1}^{s}h_{i}d_{11}^{\alpha_{i1}}\!+\!\sum_{k=2}^{n}x_{k}p_{k1},
 \ldots , \sum_{k\neq
 	j}x_{k}p_{kj}+ x_{j}\times \nonumber
 \end{align}
 \begin{align}
 &\ \ \ \ \  \sum_{i=1}^{s}h_{i}d_{jj}^{\alpha_{ij}},\cdots
 ,\sum_{k=1}^{n-1} x_{k}p_{kn}\!+\! x_{n} \sum_{i=1}^{s} h_{i}d_{nn}^{\alpha_{in}}\bigg)/{\partial x}
 \nonumber\\
 &=2\left[\begin{array}{c|c|c}
 \sum\limits_{i=1}^{s}h_{i}d_{11}^{\alpha_{i1}}+x_1 \sum\limits_{i=1}^{s}\frac{\partial h_{i}}{\partial x_1}d_{11}^{\alpha_{i1}}
 & \cdots & \  \\ \hline
 \vdots & \ddots & \   \\ \hline
 p_{n1}+x_1 \sum\limits_{i=1}^{s}\frac{\partial h_{i}}{\partial x_n}d_{11}^{\alpha_{i1}} & \cdots  & \
 \end{array} \right.\nonumber \\
 & \ \ \ \ \ \ \ \ \ \ \ \ \ \ \ \  \left.\begin{array}{c|c}
 \ &  p_{1n}+x_n \sum\limits_{i=1}^{s}\frac{\partial h_{i}}{\partial x_1}d_{nn}^{\alpha_{in}} \\ \hline
 \ &   \vdots \\ \hline
 \ &   \sum\limits_{i=1}^{s}h_{i}d_{nn}^{\alpha_{in}}+x_n \sum\limits_{i=1}^{s}\frac{\partial h_{i}}{\partial x_n}d_{nn}^{\alpha_{in}}
 \end{array} \right]\label{xnnn}\\
% \end{align}
% \begin{align}
 &=2 \times \left\{ \left[\begin{array}{ccccccc}
 \sum\limits_{i=1}^{s}h_{i}d_{11}^{\alpha_{i1}} %& \cdots & p_{1j}
 & \cdots & p_{1n} \\
 \vdots & \ddots & \vdots\\
 %p_{j1} & \cdots & \sum_{i=1}^{s}h_{i}(x)d_{jj}^{\alpha_{ij}} & \cdots & p_{jn}\\
 %\vdots & & \vdots
 %& & \vdots\\
 p_{n1} %& \cdots & p_{nj}
 & \cdots &
 \sum\limits_{i=1}^{s}h_{i}d_{nn}^{\alpha_{in}}
 \end{array}
 \right]\right.\nonumber\\
 &\left.\ \ \ \ \ \ \ \ \ \ +\left[\begin{array}{ccccccc}
 x_1 \sum\limits_{i=1}^{s}\frac{\partial h_{i}}{\partial x_1}d_{11}^{\alpha_{i1}}& \cdots %& x_j \sum_{i=1}^{s}\frac{\partial h_{i}}{\partial x_1}d_{jj}^{\alpha_{ij}} & \cdots
 & x_n\sum\limits_{i=1}^{s}\frac{\partial h_{i}}{\partial x_1}d_{nn}^{\alpha_{in}} \\
 \vdots  &\ddots & \vdots\\
 %x_1 \sum_{i=1}^{s}\frac{\partial h_{i}}{\partial x_j}d_{11}^{\alpha_{i1}} & \cdots %& x_j \sum_{i=1}^{s}\frac{\partial h_{i}}{\partial x_j}d_{jj}^{\alpha_{ij}}& \cdots
 %&x_n \sum_{i=1}^{s}\frac{\partial h_{i}}{\partial x_j}d_{nn}^{\alpha_{in}}\\
 %\vdots & & \vdots & & \vdots\\
 x_1 \sum\limits_{i=1}^{s}\frac{\partial h_{i}}{\partial x_n}d_{11}^{\alpha_{i1}} & \cdots %& x_ j \sum_{i=1}^{s}\frac{\partial h_{i}}{\partial x_n}d_{jj}^{\alpha_{ij}} & \cdots
 & x_n \sum\limits_{i=1}^{s}\frac{\partial h_{i}}{\partial x_n}d_{nn}^{\alpha_{in}}
 \end{array}
 \right] \right\}\nonumber\\
 &=2\bigg(P(x)+\sum_{i=1}^{s}\frac{\partial h_{i}}{\partial x}x^{\tau}D_i\bigg).
 % &=2P(x)+2\sum_{i=1}^{s}\frac{\partial h_i}{\partial x}x^{\tau}D_i
 \label{aa4}
 \end{align}
In order to achieve an upper bound of the Hessian matrix, the upper bound of the second part in (\ref{aa4}) is needed to be estimated. For any $y \in \mathbb{R}^{n}$, it is clear by $D_i>0$ that
\begin{align}
y^\tau\frac{\partial h_{i}}{\partial x}x^{\tau}D_iy\!=\!y^\tau D_i^{\frac{1}{2}} D_i^{\!-\!\frac{1}{2}}\frac{\partial h_{i}}{\partial x}x^{\tau}D_i^{\frac{1}{2}}D_i^{\frac{1}{2}} y.\label{rank1}
\end{align}
In the derivation above, we drop the argument of $h_i(x)$ for
simplicity. In what follows, we will do so in some cases for
notational convenience. Since
\begin{align}
rank\bigg(D_i^{\!-\!\frac{1}{2}}\frac{\partial h_{i}}{\partial x}x^{\tau}D_i^{\frac{1}{2}}\bigg)\leq rank(x^\tau) \leq 1 \label{rank+}
\end{align}
  there exist an invertible matrix $P$ such that
 \begin{align}
 &P\bigg(D_i^{\!-\!\frac{1}{2}}\frac{\partial h_{i}}{\partial x}x^{\tau}D_i^{\frac{1}{2}}\bigg)P^{-1}=\nonumber\\
 &diag\bigg\{tr\bigg(PD_i^{\!-\!\frac{1}{2}}\frac{\partial h_{i}}{\partial x}x^{\tau}D_i^{\frac{1}{2}}P^{\!-\!1}\bigg), 0, \cdots, 0\bigg\}.\label{pdfra}
 \end{align}
 It  follows from (\ref{rank1}) and (\ref{pdfra}) that
\begin{align}
&y^\tau\left(\frac{\partial h_{i}}{\partial x}x^{\tau}D_i\right)y\!=\!\bigg(P^{\!-\!\tau}D_i^{\frac{1}{2}}y\bigg)^\tau \bigg(PD_i^{\!-\!\frac{1}{2}}\frac{\partial h_{i}}{\partial x}x^{\tau}D_i^{\frac{1}{2}}P^{\!-\!1}\bigg )\nonumber\\
&\!\times\!\bigg(PD_i^{\frac{1}{2}} y\bigg)\!=\!\bigg(P^{\!-\!\tau}D_i^{\frac{1}{2}}y\bigg)^\tau diag\bigg\{tr\bigg(PD_i^{\!-\!\frac{1}{2}}\frac{\partial h_{i}}{\partial x}x^{\tau}D_i^{\frac{1}{2}}P^{\!-\!1}\bigg)\nonumber \\
& ,0, \cdots, 0\bigg\}\bigg(PD_i^{\frac{1}{2}} y\bigg)\leq \left| tr\bigg(PD_i^{-\frac{1}{2}}\frac{\partial h_{i}}{\partial x}x^{\tau}D_i^{\frac{1}{2}}P^{\!-\!1}\bigg)\right|\nonumber\\
& \bigg(P^{-\tau}D_i^{\frac{1}{2}}y\bigg)^\tau \bigg(PD_i^{\frac{1}{2}} y\bigg).\label{fracbi}
\end{align}
In light of the operational property of matrix trace, we have
\begin{align}
tr(PD_i^{\!-\!\frac{1}{2}}\frac{\partial h_{i}}{\partial x}x^{\tau}D_i^{\frac{1}{2}}P^{\!-\!1})\!=\!tr\left(\frac{\partial h_{i}}{\partial x}x^{\tau}\right)\!=\! x^\tau\frac{\partial h_{i}}{\partial x}.\label{hxd1}
\end{align}
Combining (\ref{fracbi}) and (\ref{hxd1}) yield
\begin{align}
&y^\tau\left(\frac{\partial h_{i}}{\partial x}x^{\tau}D_i\right)y\!\leq\!\left| tr\bigg(\frac{\partial h_{i}}{\partial x}x^{\tau}\bigg)\right|\bigg(P^{-\tau}D_i^{\frac{1}{2}}y\bigg)^\tau\times \nonumber\\
&\bigg(PD_i^{\frac{1}{2}} y\bigg)=\bigg|\sum_{j=1}^{n}x_j\frac{\partial h_{i}}{\partial x_j}\bigg| y^\tau D_iy\leq\sum_{j=1}^{n}\left| x_j\frac{\partial h_{i}}{\partial x_j}\right|y^\tau D_iy\nonumber.
\end{align}
In terms of (\ref{xjpar}) in Lemma $1$ and  $D-D_{i}\geq 0$, we obtain
\begin{eqnarray}
y^\tau\frac{\partial h_{i}}{\partial x}x^{\tau}D_iy
 &\leq& \sum\limits_{j=1}^{n}\beta_{ij} y^\tau D_iy\nonumber\\
 &\leq& \sum\limits_{j=1}^{n}\beta_{ij} y^\tau Dy\nonumber
%=y^\tau \beta D_iy\leq y^\tau \beta D y
\end{eqnarray}
which together with (\ref{bound2}) gives that
\begin{eqnarray*}
y^{\tau}\left(\sum\limits_{i\!=\!1}^{s}\frac{\partial h_{i}}{\partial x}x^{\tau}D_i\right)y &\leq& \sum\limits_{i\!=\!1}^{s}\sum\limits_{j\!=\!1}^{n}\beta_{ij} y^\tau Dy\\
&\leq& \beta y^\tau  D y.
\end{eqnarray*}
Therefore, the inequality (\ref{ypath2}) can be obtained easily.
%\noindent This result can directly be checked by using Assumption $1$ and
%(\ref{small012}).
 %Now choose a positive definite matrix $\Upsilon$ such that
%\begin{equation}
%\Upsilon \geq \left[
%\begin{array}{cccc}
% \sum_{i=1}^{s}\beta_{i1}& \cdots & 0 \\
%\vdots & \ddots & \vdots \\
%0 & \cdots & \sum_{i=1}^{s}\beta_{in}
%\end{array}
%\right]>0. \label{up-b}
%\end{equation}
\section{Stochastic Stability Analysis}
In this section, relaxed stability conditions for the unforced
It\^{o} stochastic T-S model will be given by using the
line integral Lyapunov function $V(x)$. The unforced stochastic T-S
model   can be presented by
\begin{equation}
dx= \sum_{i=1}^{s}h_{i}A_{i}x dt+\sum_{i=1}^{s}h_{i}C_{i}x
dW(t)\label{d1}
\end{equation}
whose stability conditions are presented by Theorem $1$.
\begin{theorem} \label{wnai1*}
Under Assumption $1$, the equilibrium point of the unforced It\^{o} stochastic T-S
model based system in  (\ref{d1}) is stochastically
asymptotically stable in the large, if there exist matrices $\{D_{j}\}_{j\in\mathcal{S}}$ and $\bar{P}$
being of the forms given in (\ref{a3}),  a  diagonal matrix $D$  as well as matrices  $\{Q_{ij}:
Q_{ij}^{\tau}=Q_{ij}=Q_{ji}\}_{i,j \in \mathcal{S}}$ such that for all $i,j \in \mathcal{S}$,  the
inequalities hold:
\begin{align}
& P_{j}=\bar{P}+D_{j}>0, \ D-D_{j}\geq 0 \label{*e1h+}\\
%\end{align}
%\begin{align}
&(P_{j}A_{i})^S+C_{i}^{\tau}(P_{j}+\beta D)C_{i}+Q_{ij}<0\label{*2c61++} \\
&\Theta_1=\left[
\begin{array}{ccccc}
Q_{11} & \ldots & Q_{1s}\\
\vdots & \ddots & \vdots\\
Q_{s1} & \ldots & Q_{ss}
\end{array}
\right] > 0\label{*2c6+}
\end{align}
where $\beta$ is given in (\ref{bound2}).
\end{theorem}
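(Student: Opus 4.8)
The plan is to apply It\^{o}'s formula to the line integral Lyapunov function $V(x)$ in (\ref{ea4}) along the trajectories of (\ref{d1}): with $\mathcal{L}$ denoting the infinitesimal generator of (\ref{d1}), I would establish that $\mathcal{L}V(x)\le-\varepsilon\|x\|^{2}$ for some $\varepsilon>0$ and all $x$. Stochastic asymptotic stability in the large then follows from the classical stochastic Lyapunov theorem in \cite{XMao1997}, because $V$ is continuously differentiable, positive definite and radially unbounded by Fact~1.

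Since $W$ is a scalar Wiener process and $V$ does not depend on $t$,
\[
\mathcal{L}V(x)=\frac{\partial V(x)}{\partial x^{\tau}}\Big(\sum_{i=1}^{s}h_{i}A_{i}x\Big)+\frac{1}{2}\Big(\sum_{i=1}^{s}h_{i}C_{i}x\Big)^{\tau}\frac{\partial^{2}V(x)}{\partial x\partial x^{\tau}}\Big(\sum_{i=1}^{s}h_{i}C_{i}x\Big).
\]
For the drift term, Fact~3 gives $\partial V/\partial x^{\tau}=2\sum_{j}h_{j}x^{\tau}P_{j}$, so, symmetrizing the scalar $2x^{\tau}P_{j}A_{i}x$, it equals $\sum_{i,j}h_{i}h_{j}\,x^{\tau}(P_{j}A_{i})^{S}x$. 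For the diffusion term, Lemma~2 (whose estimate holds for every test vector and is therefore a genuine Loewner bound) gives $\tfrac{\partial^{2}V(x)}{\partial x\partial x^{\tau}}\preceq 2\big(P(x)+\beta D\big)$, and $P(x)+\beta D\succ0$ since $P(x)=\sum_{i}h_{i}P_{i}\succ0$ by (\ref{*e1h+}) and $\beta D\succeq0$. Positive definiteness makes $z\mapsto z^{\tau}(P(x)+\beta D)z$ convex, so Jensen's inequality and $\sum_{i}h_{i}=1$ collapse the cross terms:
\[
\tfrac12\Big(\sum_{i}h_{i}C_{i}x\Big)^{\tau}\tfrac{\partial^{2}V(x)}{\partial x\partial x^{\tau}}\Big(\sum_{i}h_{i}C_{i}x\Big)\le\sum_{i}h_{i}\,x^{\tau}C_{i}^{\tau}\big(P(x)+\beta D\big)C_{i}x=\sum_{i,j}h_{i}h_{j}\,x^{\tau}C_{i}^{\tau}(P_{j}+\beta D)C_{i}x .
\]
Adding the two parts gives $\mathcal{L}V(x)\le\sum_{i,j}h_{i}h_{j}\,x^{\tau}\big[(P_{j}A_{i})^{S}+C_{i}^{\tau}(P_{j}+\beta D)C_{i}\big]x$.

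Next, (\ref{*2c61++}) replaces the bracketed matrix by $-Q_{ij}$, so $\mathcal{L}V(x)\le-\sum_{i,j}h_{i}h_{j}\,x^{\tau}Q_{ij}x$. Writing $h=(h_{1},\dots,h_{s})^{\tau}$ and using the Kronecker product, $\sum_{i,j}h_{i}h_{j}Q_{ij}=(h\otimes I)^{\tau}\Theta_{1}(h\otimes I)\ge\lambda_{\min}(\Theta_{1})\|h\|^{2}I$ by (\ref{*2c6+}), while $\|h\|^{2}\ge1/s$ because $\sum_{i}h_{i}=1$ (Cauchy--Schwarz). Hence $\mathcal{L}V(x)\le-\varepsilon\|x\|^{2}$ with $\varepsilon=\lambda_{\min}(\Theta_{1})/s>0$, and the cited stochastic Lyapunov criterion closes the argument.

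I expect the diffusion term to be the only real obstacle. Two things must be got right there: that Lemma~2 may legitimately be used as a matrix (Loewner) inequality --- which needs its scalar bound to hold for all $y$ and $P(x)+\beta D$ to be positive definite --- and that the convexity/Jensen step is precisely what turns the triple sum $\sum_{i,j,k}h_{i}h_{j}h_{k}\,x^{\tau}C_{i}^{\tau}P_{k}C_{j}x$ coming from $P(x)$ into the double sum appearing in (\ref{*2c61++}); without positive definiteness this reduction is not available. One should also check $V\in C^{2}$ so that It\^{o}'s formula applies, which follows from the differentiability of the $h_{i}$ established in Lemma~1.
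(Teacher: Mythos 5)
Your proposal is correct and follows essentially the same route as the paper: It\^{o}'s generator applied to the line-integral $V$, Lemma~2 to bound the Hessian quadratic form by $2(P(x)+\beta D)$, collapse of the diffusion cross terms using positive definiteness of $P_j+\beta D$ (the paper does this via $[C_i^{\tau}MC_k]^S\le C_i^{\tau}MC_i+C_k^{\tau}MC_k$, which is exactly your Jensen step), and then $\Theta_1>0$ to conclude $\mathcal{L}V<0$. The only cosmetic difference is your explicit quantitative bound $-\lambda_{\min}(\Theta_1)\|x\|^2/s$, where the paper simply concludes negativity for $x\neq 0$ and invokes a stochastic Lyapunov lemma.
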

\begin{proof} Suppose that there exist matrices $\{D_{j}\}_{j
\in\mathcal{S}}$, $\bar{P}$ and $D$ as well as matrices  $\{Q_{ij}
\}_{i,j \in \mathcal{S}}$ such that the matrix
inequalities (\ref{*e1h+})-(\ref{*2c6+}) hold. Using matrices
$\{P_{j}=\bar{P}+D_{j}\}_{j\in\mathcal{S}}$ and the fuzzy basis
functions $\{h_{j}\}_{j\in\mathcal{S}}$, one can construct a
vector filed $\bar{f}(x)$ on the state space $\mathbb{R}^{n}$ by
{\setlength\abovedisplayskip{6pt}
\setlength\belowdisplayskip{5pt}
\begin{equation}
\bar{f}(x)=P(x)x=\Sigma_{j=1}^{s}h_{j}P_{j}x. \label{zh1-}
\end{equation}}
Given a fixed state $x\in\mathbbm{R}^n$ arbitrarily, for a path
$\Gamma(0,x)$ from the origin to $x$, let
{\setlength\abovedisplayskip{6pt}
\setlength\belowdisplayskip{5pt}
\begin{equation} V(x)=2\int_{\Gamma(0,x)}\bar{f}^{\tau}(\Psi)
d\Psi. \label{zh1}
\end{equation}}
One can see from Fact $1$ that $V(x)$ is a Lyapunov function candidate.
It follows from  (\ref{aa3}),  (\ref{aa4}) in Lemma $2$ and (\ref{d1}),
that
\begin{align}
&\mathcal {L} V(x)=\frac{\partial V(x)}{\partial t}+\frac{\partial
 V(x)}{\partial x^{\tau}}\left(\sum_{i=1}^{s}h_{i}A_{i}x\right)\nonumber\\
 &+\frac{1}{2}tr\left[\left(\sum_{i=1}^{s}h_{i}C_{i}x\right)^{\tau}\frac{\partial^2
      V(x)}{\partial x \partial x^{\tau}}\left(\sum_{i=1}^{s}h_{i}C_{i}x\right)\right]\nonumber\\
 &=2x^{\tau}P(x)\left(\sum_{i=1}^{s}h_{i}A_{i}x\right)+\frac{1}{2}tr\left[\left(\sum_{i=1}^{s}h_{i}C_{i}x\right)^{\tau} \right.\nonumber\\
 &\times\left.2\bigg[P(x)+\sum_{i=1}^{s}\frac{\partial h_{i}}{\partial x}x^{\tau}D_i\bigg]\left(\sum_{i=1}^{s}h_{i}C_{i}x\right)\right].\label{aa1}
   %& =\sum_{i=1}^{s}\sum_{j=1}^{s}\sum_{k=1}^{s}h_{i}h_{j}h_{k}x^{\tau}\left\{
%      P_{j}A_{i}+A_{i}^{\tau}P_{j}+\frac{1}{2}[C_{i}^{\tau}\right.\nonumber\\
% & \left. \times\left(P_{j}+\beta D\right)C_{k}
%    \!+\!C_{k}^{\tau}\left(P_{j}\!+\!\beta D\right)C_{i}] \right \}x.
\end{align}
Choosing $y=\left(\sum\limits_{i=1}^{s}h_{i}C_{i}x\right)$ in Lemma $2$, in terms of (\ref{aa1}),  one has
%By using Lemma \ref{loyf13141}, one can see that
%\begin{align}
%&\left(\sum_{i=1}^{s}h_{i}C_{i}x\right)^{\tau}\left(\sum_{i=1}^{s}\frac{\partial h_{i}}{\partial x}x^{\tau}D_i\right)\left(\sum_{i=1}^{s}h_{i}C_{i}x\right)\nonumber
%\end{align}
%\begin{align}
%&\leq\left(\sum_{i=1}^{s}h_{i}C_{i}x\right)^{\tau}\left(\sum_{i=1}^{s}\sum_{j=1}^{n}\beta_{ij}D\right)\left(\sum_{i=1}^{s}h_{i}C_{i}x\right)\nonumber\\
%&=\left(\sum_{i=1}^{s}h_{i}C_{i}x\right)^{\tau}\left(\beta D\right)\left(\sum_{i=1}^{s}h_{i}C_{i}x\right).\label{left1}
%\end{align}
\begin{align}
&\mathcal {L} V(x)\leq2x^{\tau}P(x)\left(\sum_{i=1}^{s}h_{i}A_{i}x\right)+\frac{1}{2}tr\left[\left(\sum_{i=1}^{s}h_{i}C_{i}x\right)^{\tau} \right.\nonumber\\
 &\times\left.\bigg(2P(x)
  +2\beta D\bigg)\left(\sum_{i=1}^{s}h_{i}C_{i}x\right)\right]
  =\sum_{i=1}^{s}\sum_{j=1}^{s}\sum_{k=1}^{s}h_{i}h_{j}h_{k}\nonumber \\
  &\times x^{\tau}\left[P_{j}A_{i}+\frac{1}{2}C_{i}^\tau\left(P_{j}+\beta D\right)C_{k}\right]^Sx.\label{aa101}
\end{align}
 Fo $i,j,k\in\mathcal{S}$, inequalities
(\ref{*e1h+}) and $\beta \mathcal{D} >0$ lead to
\[[C_{i}^{\tau}\left(P_{j}\!+\!\beta D\right)C_{k}]^S
\!\leq\! C_{i}^{\tau}\left(P_{j}\!+\!\beta D\right)C_{i}\!+\!C_{k}^{\tau}\left(P_{j}\!+\!\beta D\right)C_{k}. \]
Using these inequalities and  (\ref{aa1}), one has that
\begin{eqnarray}
\mathcal {L} V(x) \!\leq\! \sum_{i\!=\!1}^{s}\sum_{j\!=\!1}^{s}h_{i}h_{j}x^{\tau}\left[(P_{j}A_{i})^S\!+\!C_{i}^{\tau}(P_{j}\!+\!\beta D)C_{i}\right]x\label{a2}
\end{eqnarray}
which together with (\ref{*2c61++}) gives
\begin{align*}
\mathcal {L} V\!<-\!x^{\tau}
\left[\begin{array}{ccc}
 h_{1}I & \cdots
& h_{s}I
\end{array}
\right]
\Theta_1
\left[
\begin{array}{c}
h_{1}I \\ \vdots \\
h_{s}I
\end{array}
\right]x. \label{c61-}
\end{align*}
Then it follows from (\ref{*2c6+}) that for $x\neq 0$,
$\mathcal {L} V< 0 $.
Recalling that $V(x)$ is
radially unbounded, we can conclude that the equilibrium point of
system (\ref{d1}) is stochastically asymptotically stable in the
large by Lemma $3$ in \cite{zhouren2010}.
\end{proof}
\begin{remark}
For the stability analysis of the unforced It\^{o} stochastic T-S fuzzy system (\ref{d1}), the involved Hessian
matrix $\frac{\partial^2
V(x)}{\partial x \partial x^{\tau}}$ of the line integral  $V(x)$  is related to  partial derivatives of the  basis
 functions. It is clear from (\ref{xnnn}) that the entry $\mathscr{H}_{ij}$ in slot $(i,j)$ for Hessian
matrix is given by
 \begin{align}
 \mathscr{H}_{ij}=
\left\{
\begin{array}{l}
\begin{array}{cc}
   \sum\limits_{i=1}^{s}h_{i}d_{jj}^{\alpha_{ij}}+x_j\sum\limits_{i=1}^{s}\frac{\partial h_{i}}{\partial x_j}d_{jj}^{\alpha_{ij}},
\end{array}
  i=j \\
\begin{array}{cc}
  P_{ij}+x_j\sum\limits_{i=1}^{s}\frac{\partial h_{i}}{\partial x_j}d_{jj}^{\alpha_{ij}},
\end{array}  \ \ \ \  \ \ \ \  i\neq j .
  \end{array}
 \right.\nonumber
 \end{align}
 So the Hessian
matrix is highly dependent on the state $x$ and the partial derivatives of the  basis
 functions, which gives rise to difficulty and challenging for estimating the upper bound of $\mathcal {L}V(x)$.  Fortunately, this Hessian matrix can be decomposed into two terms in (\ref{aa4}). The key point is to handle the second term $\sum_{i=1}^{s}\frac{\partial h_{i}}{\partial x}x^{\tau}D_i$ which is involved in (\ref{aa1}) of $\mathcal {L}V$.
 \end{remark}
 \begin{remark}
 To our best knowledge, there
is no results on  dealing with the Hessian matrix of line integral  function, therefore we give a novel  Lemma $2$ to do so. Facilitating by Assumption $1$, an upper bound $ y^\tau\beta Dy $ of $y^\tau\sum^s_{i\!=\!1}\frac{\partial h_{i}}{\partial x}x^{\tau}D_iy$ can be obtained.
The idea behind this lemma  is as follows. It is observed that the rank of matrix $D_i^{\!-\!\frac{1}{2}}\frac{\partial h_{i}}{\partial x}x^{\tau}D_i^{\frac{1}{2}}$ is less than or equal to $1$. So this matrix is equivalent to a diagonal matrix with only one nonzero element $tr(PD_i^{\!-\!\frac{1}{2}}\frac{\partial h_{i}}{\partial x}x^{\tau}D_i^{\frac{1}{2}}P^{\!-\!1})\!=\! x^\tau\frac{\partial h_{i}}{\partial x}$. This elegant property of matrix with rank one together with Lemma $1$ leads to the upper bound $\beta y^\tau  D y$.
Based on Lemma $2$, an estimate of  $\mathcal {L}V$  can be achieved which plays an important role on the stability analysis of the  unforced stochastic T-S
fuzzy system (\ref{d1}).
\end{remark}
 Similarly,
 the following corollary based on the common quadratic
Lyapunov function can be obtained.

\begin{corollary}$\label{xisf12}$
If there exist positive matrices  $\{Q_{i}\}_{i \in \mathcal{S}}>0$
and  $P>0$ satisfying inequalities
{\setlength\abovedisplayskip{6pt}
\setlength\belowdisplayskip{5pt}
\begin{eqnarray}
(PA_{i})^S+C_{i}^{\tau}PC_{i}+Q_{i}<0,\quad i \in
\mathcal{S}\label{c61+-}
\end{eqnarray}}
then the equilibrium point of the unforced It\^{o} stochastic T-S
model (\ref{d1}) is stochastically
asymptotically stable in the large.
\end{corollary}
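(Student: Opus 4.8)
The plan is to re-run the argument behind Theorem~1 with the \emph{common} quadratic Lyapunov function $V(x)=x^{\tau}Px$, for which the Hessian is constant so that the state- and membership-function-derivative-dependent summand in the decomposition (\ref{aa4}) vanishes identically. This is precisely why the condition (\ref{c61+-}) carries neither a $\beta D$ correction nor any use of Assumption~1: with a constant Hessian there are no basis-function derivatives left to bound. Concretely, by Fact~2 the function $V(x)=x^{\tau}Px$ is exactly the line integral function (\ref{ea4}) specialized to $P_{1}=\cdots=P_{s}=P$, hence by Fact~1 it is continuously differentiable, positive definite and radially unbounded, with $\lambda_{\min}(P)|x|^{2}\le V(x)\le\lambda_{\max}(P)|x|^{2}$. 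Since $P$ is constant, $\partial V/\partial t=0$, $\partial V/\partial x^{\tau}=2x^{\tau}P$ and $\partial^{2}V/(\partial x\,\partial x^{\tau})=2P$.

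Next I would evaluate the infinitesimal generator along the trajectories of (\ref{d1}):
\begin{align*}
\mathcal{L}V(x)&=2x^{\tau}P\Big(\sum_{i=1}^{s}h_{i}A_{i}x\Big)+\Big(\sum_{i=1}^{s}h_{i}C_{i}x\Big)^{\tau}P\Big(\sum_{j=1}^{s}h_{j}C_{j}x\Big)\\
&=\sum_{i=1}^{s}h_{i}\,x^{\tau}(PA_{i})^{S}x+\sum_{i,j=1}^{s}h_{i}h_{j}\,x^{\tau}C_{i}^{\tau}PC_{j}x.
\end{align*}
For the diffusion term I would invoke the same cross-term estimate already used in the proof of Theorem~1, namely $[C_{i}^{\tau}PC_{j}]^{S}\le C_{i}^{\tau}PC_{i}+C_{j}^{\tau}PC_{j}$ for $P>0$, together with $\sum_{i=1}^{s}h_{i}=1$, to obtain $\sum_{i,j}h_{i}h_{j}C_{i}^{\tau}PC_{j}\le\sum_{i=1}^{s}h_{i}C_{i}^{\tau}PC_{i}$. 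Hence
\[
\mathcal{L}V(x)\le\sum_{i=1}^{s}h_{i}\,x^{\tau}\big[(PA_{i})^{S}+C_{i}^{\tau}PC_{i}\big]x<-\sum_{i=1}^{s}h_{i}\,x^{\tau}Q_{i}x\le-\Big(\min_{i\in\mathcal{S}}\lambda_{\min}(Q_{i})\Big)|x|^{2},
\]
where the strict inequality uses the hypothesis (\ref{c61+-}) and the last step uses $Q_{i}>0$ together with $\sum_{i}h_{i}=1$. Thus $\mathcal{L}V(x)<0$ for every $x\neq 0$; combining this with the radial unboundedness of $V$ and applying the stochastic Lyapunov theorem (Lemma~3 in \cite{zhouren2010}), exactly as at the end of the proof of Theorem~1, yields that the equilibrium point of (\ref{d1}) is stochastically asymptotically stable in the large.

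I do not anticipate a genuine technical obstacle; the only point I would take care to state explicitly is that this corollary is \emph{not} recovered simply by forcing $P_{1}=\cdots=P_{s}=P$ in Theorem~1, since that substitution would still leave the spurious term $\beta D$ inherited from the line integral Hessian. The collapse of the Hessian to the constant $2P$ is exactly what turns the condition (\ref{*2c61++}) into the sharper, assumption-free condition (\ref{c61+-}), and it seems worth recording this observation either in the proof or in a short accompanying remark.
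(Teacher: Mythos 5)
Your proof is correct and is exactly the argument the paper intends: the corollary is stated as following ``similarly'' to Theorem~1, i.e., by running the generator computation with the common quadratic function $V(x)=x^{\tau}Px$ (constant Hessian $2P$, hence no $\beta D$ correction and no need for Assumption~1) and the same cross-term estimate $[C_{i}^{\tau}PC_{j}]^{S}\le C_{i}^{\tau}PC_{i}+C_{j}^{\tau}PC_{j}$, followed by the stochastic Lyapunov theorem of \cite{zhouren2010}. Your closing observation --- that the corollary is not literally the $P_{1}=\cdots=P_{s}=P$ specialization of the LMIs of Theorem~1, because the bound of Lemma~2 retains $\beta D$ even though the exact Hessian term $\sum_{i}(\partial h_{i}/\partial x)x^{\tau}D_{i}$ vanishes when all $D_{i}$ coincide --- is accurate and in fact slightly sharper than the paper's own Remark~5.
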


\begin{remark}%The conditions for stochastic asymptotic stability of the unforced
%It\^{o} stochastic T-S model   (\ref{d1}) in
%Theorem \ref{wnai1*} are obtained via a technique which is
%similar to those in \cite{k2000} and \cite{zhou07}.
It is noted that  for any $i, j \in \mathcal{S}$ if
 $P_i=P,\ Q_{ii}=Q_{i}$
 and $ Q_{ij}=0, for\ i\neq j$ in inequalities
(\ref{*e1h+})-(\ref{*2c6+}),
 the conditions in Theorem \ref{wnai1*}
 become conditions in
Corollary $1$. This means that if inequalities
(\ref{c61+-}) have a set of solutions  $\{Q_{i}\}_{i\in
\mathcal{S}}$ and $P$, then $\{P_{j}=P\}_{j\in \mathcal{S}}$,
$\{Q_{ii}=Q_{i}\}_ {i\in \mathcal{S}}$ and $\{Q_{ij}=0,\ i\neq j,\ i,
j\in \mathcal{S} \}$ are solutions to inequalities
  (\ref{*e1h+})-(\ref{*2c6+}). However, if the inequalities
 (\ref{*e1h+})-(\ref{*2c6+}) have a set of solutions $\{P_{j}\}_{j\in
  \mathcal{S}}$, $\{Q_{ij}\}_
  {i, j\in \mathcal{S}}$, the inequalities (\ref{c61+-}) do
  not necessarily have a solution (See Example
  $1$). Thus, the line integral based stochastic stability result is more general than that derived from
quadratic Lyapunov function.
\end{remark}
 The following example is given to illustrate that the line integral based analysis results of stochastic
asymptotic stability in Theorem $1$ are less conservative than the ones based on quadratic Lyapunov function.\vspace{0.15cm}
%\noindent \textbf{Example} $1:$\quad
\begin{example}
For the  unforced stochastic T-S
model (\ref{d1}), the fuzzy rules are given as follows:\\
%\par\setlength\parindent{3em}
\textbf{$R_{1}$}: $ $If $x_{1}$ is $ F_{1}^{1}$ and $x_{2}$ is $
F_{2}^{1}$, then $dx=A_{1}xdt+C_{1}x dW(t)$\\% \par\setlength\parindent{1.5em}
\textbf{$R_{2}$}: $ $If $x_{1}$ is $ F_{1}^{1}$ and $x_{2}$ is $
F_{2}^{2}$,  then $dx=A_{2}xdt+C_{2}x dW(t)$
\textbf{$R_{3}$}:  $ $If $x_{1}$ is $ F_{1}^{2}$ and $x_{2}$ is $
F_{2}^{1}$, then   $dx=A_{3}xdt+C_{3}x dW(t)$
 \textbf{$R_{4}$}: $ $If $x_{1}$ is $ F_{1}^{2}$ and $x_{2}$ is $
F_{2}^{2}$, then  $dx=A_{4}xdt+C_{4}x dW(t)$
\vspace{-0.2cm}
\begin{alignat*}{4}
&A_{1}\!=\!\left[
\begin{array}{cc}
  \!-\!0.5 & 0\\
   \!-\!1 & a
\end{array} \right]\!,\!
A_{2}\!=\!\left[
\begin{array}{cc}
 \!-\!0.8 & 1\\
  0 & \!-\!1
\end{array} \right]\!,\!
 A_{3}\!=\!\left[
\begin{array}{cc}
  \!-\!0.9 & 1.09 \\
         0 & \!-\!0.8
\end{array} \right]\\
&A_{4}\!=\!\left[
\begin{array}{cc}
  b & 1 \\
         0 & \!-\!0.4
\end{array} \right]\!,\!
C_{1}=\left[
\begin{array}{cc}
  0.2 & 0 \\
  0 & \!-\!0.1
\end{array} \right]\!,\!
C_{2}\!=\!\left[
\begin{array}{cc}
  0.3 & 0 \\
         0 & \!-\!0.2
\end{array} \right]\\
&C_{3}=\left[
\begin{array}{cc}
  0.2 & 0 \\
         0 & \!-\!0.3
\end{array} \right],
C_{4}=\left[
\begin{array}{cc}
  0.35 & 0 \\
         0 & \!-\!0.1
\end{array} \right]
\end{alignat*}
where $a$ and $b$ are real parameters.
Then all ordinal numbers $\alpha_{ij}, i=1, \cdots,4; j=1,2$ read as
\begin{align*}
  &\alpha_{11}=1,\  \alpha_{21}=1,\  \alpha_{31}=2,\  \alpha_{41}=2\\
  &\alpha_{12}=1,\  \alpha_{22}=2,\  \alpha_{32}=1,\  \alpha_{42}=2
\end{align*}
with $s_1=s_2=2$.
\vspace{-0.3cm}
\begin{figure}[H]
\begin{minipage}{7.2cm}
     \includegraphics[width=3.3in,height=1.5 in]{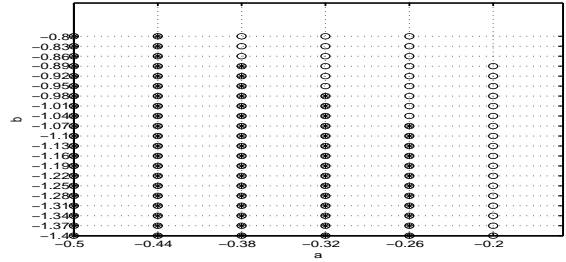}
\end{minipage}
\caption{Stable  regions: $\circ$---Theorem 1  and $\star$---Corollary 1}
\end{figure}
\vspace{-0.3cm}
%\begin{figure}[H]
%\setlength{\abovecaptionskip}{-0.25cm}
%\begin{center}
   % \includegraphics[width=3.3in,height=2.0in]{sto.eps}
    %\captionsetup[figure]{indention=5mm}
  %  \caption{Stochastically
%asymptotically stable  regions:$\circ$---Theorem 1  and $\star$---Corollary 1}
 %   \end{center}
%\end{figure}
\noindent The figure above shows the regions of stochastic asymptotic stability. The points marked with $``\circ$" and $``\star"$ show the regions of stochastic asymptotic stablity  obtained by using Theorem 1 and Corollary 1, respectively. %From Fig.1,  it is clear that the results in Theorem $1$ based on the line integral Lyapunov approach  yield larger  regions of asymptotic stability and extended dissipativity than the ones in Corollary $1$ based on quadratic Lyapunov method.
From Fig. $2$, it is clear that the results in Theorem $1$ yield larger  regions of stochastic stability than the ones in Corollary $1$.
As expected, stability results derived from
line integral  function are less conservative than those derived from
quadratic Lyapunov  function.
\end{example}

\section{Fuzzy Controller Design}
\noindent In this section, stability analysis and controller design for
It\^{o} closed-loop stochastic  system will be addressed. More
precisely, we are interested in finding a fuzzy controller such that the resulting
It\^{o} closed-loop stochastic system is stochastically
asymptotically stable. The fuzzy control law can be described as
follows:\\
\textbf{Controller Rule} $ i$:\\
 \textbf{If} $x_{1}$ is $%
F_{1}^{\alpha_{i1}},$ and $\ldots $ , $x _{n}$ is
$F_{n}^{\alpha_{in}}$,  \textbf{then}
$u=K_{i}x,  \ i\in \mathcal{S}.$
 The overall state feedback fuzzy controller is
represented by
{\setlength\abovedisplayskip{6pt}
\setlength\belowdisplayskip{5pt}
\begin{equation}
u=\sum_{i=1}^{s}h_{i}K_{i}x.  \label{d2}
\end{equation}}
Substituting (\ref{d2}) into (\ref{a1}) yields
{\setlength\abovedisplayskip{6pt}
\setlength\belowdisplayskip{5pt}
\begin{equation}
dx= \sum_{i=1}^{s}\sum_{j=1}^{s}h_{i}h_{j}A_{ij}x
dt+\sum_{i=1}^{s}h_{i}C_{i}x dW(t)\label{e6}
\end{equation}}
where
{\setlength\abovedisplayskip{0.5pt}
\setlength\belowdisplayskip{0.5pt}
\begin{equation}
A_{ij}=A_{i} + B_{i}K_{j}.\label{e61}
\end{equation}}
For the closed-loop It\^{o}  stochastic system (\ref{e6}), we have
the following result.
\begin{theorem} \label{the+*-}
Under Assumption $1$, the equilibrium point of the  closed-loop It\^{o} stochastic
system (\ref{e6}) is stochastically asymptotically stable in
the large, if there exist matrices $\{D_{k}\}_{k\in\mathcal{S}}$ and $\bar{P}$
being of the forms given in (\ref{a3}),  a  diagonal matrix $D$  , as well as
matrices $\{\bar Q_{ijk}: \bar Q_{ijk}^{\tau}=\bar
Q_{ijk}=\bar Q_{jik}\}_{i,j,k \in \mathcal{S} }$ such that for all $i,\ j,\ k\in \mathcal{S}$, the following
inequalities  hold:
\begin{align}
&\quad\quad P_{k}=\bar{P}+D_{k}>0\label{e1h+},\ \ \ D-D_{k}\geq 0\allowdisplaybreaks[1]
\end{align}
\begin{align}
&(P_{k}A_{ij})^S\!+\!C_{i}^{\tau}(P_{k}+\beta D)C_{i}
\!+\!\bar Q_{ijk}<0\label{c61++}\allowdisplaybreaks[1]\\
&\quad\quad\quad\quad\quad\left[
\begin{array}{ccccc}
\bar Q_{11k} & \ldots &\bar Q_{1sk}\\
\vdots & \ddots & \vdots\\
\bar Q_{s1k} & \ldots &\bar Q_{ssk}
\end{array}
\right] > 0\label{c6+}
\end{align}
where $\beta$ is given in (\ref{bound2}).
\end{theorem}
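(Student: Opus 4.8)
The plan is to follow the proof of Theorem~\ref{wnai1*} almost verbatim, now with the closed--loop drift matrices $A_{ij}=A_i+B_iK_j$ of (\ref{e61}) in place of $A_i$, and with the same line integral Lyapunov function $V(x)$ of (\ref{ea4}) built from $\{P_k=\bar P+D_k\}_{k\in\mathcal S}$. The essential point is that the feedback (\ref{d2}) does not enter the noise channel, so the diffusion of (\ref{e6}) is still $\sum_{i=1}^{s}h_iC_ix$ and the Hessian estimate of Lemma~\ref{loyf13141} applies unchanged. First I would compute $\mathcal LV(x)$ along (\ref{e6}) using Fact~$3$ for the gradient and the decomposition (\ref{aa4}) for the Hessian, obtaining
\[
\mathcal LV(x)=2x^{\tau}P(x)\Big(\sum_{i,j}h_ih_jA_{ij}x\Big)+\tfrac12\,\mathrm{tr}\Big[\Big(\sum_ih_iC_ix\Big)^{\tau}\Big(2P(x)+2\sum_i\tfrac{\partial h_i}{\partial x}x^{\tau}D_i\Big)\Big(\sum_ih_iC_ix\Big)\Big],
\]
and then bound the Hessian correction term by $2\beta D$ through Lemma~\ref{loyf13141} with $y=\sum_ih_iC_ix$, exactly as in the passage leading to (\ref{aa101}).

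Next I would expand both remaining terms into triple sums over $i,j,k\in\mathcal S$ (the third index coming from $P(x)=\sum_kh_kP_k$ and from $\beta D=\sum_kh_k\beta D$). Using $2x^{\tau}P_kA_{ij}x=x^{\tau}(P_kA_{ij})^Sx$ this gives
\[
\mathcal LV(x)\le\sum_{i,j,k}h_ih_jh_k\,x^{\tau}\Big[(P_kA_{ij})^S+C_i^{\tau}(P_k+\beta D)C_j\Big]x .
\]
Since $P_k+\beta D>0$ by (\ref{e1h+}) and $\beta D\ge0$, I would dominate the cross term via $\big(C_i^{\tau}(P_k+\beta D)C_j\big)^S\le C_i^{\tau}(P_k+\beta D)C_i+C_j^{\tau}(P_k+\beta D)C_j$ and relabel indices, which collapses the diffusion contribution and yields
\[
\mathcal LV(x)\le\sum_{i,j,k}h_ih_jh_k\,x^{\tau}\Big[(P_kA_{ij})^S+C_i^{\tau}(P_k+\beta D)C_i\Big]x .
\]
By (\ref{c61++}) the right--hand side is strictly less than $-\sum_{i,j,k}h_ih_jh_k\,x^{\tau}\bar Q_{ijk}x=-\sum_kh_k\,x^{\tau}\big[h_1I\ \cdots\ h_sI\big][\bar Q_{ijk}]_{i,j}\big[h_1I\ \cdots\ h_sI\big]^{\tau}x$, and (\ref{c6+}) together with $h_k\ge0$, $\sum_kh_k=1$ then forces $\mathcal LV(x)<0$ for every $x\neq0$.

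Finally, since $V$ is positive definite and radially unbounded by Fact~$1$, the stochastic Lyapunov theorem (Lemma~$3$ in \cite{zhouren2010}) would give stochastic asymptotic stability in the large of the equilibrium of (\ref{e6}). I expect the main obstacle to be the same as in Theorem~\ref{wnai1*}: the bookkeeping of the now triple--indexed sums and the two symmetrizations. Concretely, one must verify that the symmetrization of the non--symmetric product $P_kA_{ij}$ in the drift and of $C_i^{\tau}(P_k+\beta D)C_j$ over the pair $i\leftrightarrow j$ in the diffusion produce precisely the index pattern on which the symmetry $\bar Q_{ijk}=\bar Q_{jik}$ is imposed, so that the final reduction to the block matrix in (\ref{c6+}) is legitimate; a second, minor point is recording that the Hessian bound of Lemma~\ref{loyf13141} remains valid here because the diffusion vector $\sum_ih_iC_ix$ is independent of the gains $K_j$.
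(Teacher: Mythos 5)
Your proposal is correct and follows exactly the route the paper intends: the paper omits the proof of this theorem with the remark that it is ``similar to that of Theorem 1,'' and your adaptation---replacing $A_i$ by $A_{ij}$ in the drift, keeping the Hessian estimate of Lemma~2 unchanged since the diffusion $\sum_i h_i C_i x$ is unaffected by the feedback, symmetrizing the cross terms, and invoking the block-matrix condition with $\bar Q_{ijk}=\bar Q_{jik}$---is precisely that argument. No gaps.
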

Since its proof is similar to that of Theorem $1$, we omit it.
\begin{remark}
%Since there is the term $P_{k}A_{ij}$  in inequalities
%(\ref{c61++}), it is
%impossible to determine the control gain $K_j$ by some change of
%variables. For controllerdesign,
For determining the control gains $K_j$, pre-multipling and post-multipling $P_{k}^{-1}$ to
inequalities (\ref{c61++}) lead to for any $i,j,k \in\mathcal{S}$,
%{\setlength\abovedisplayskip{6pt}
%\setlength\belowdisplayskip{5pt}
\begin{eqnarray}
(A_{ij}P_{k}^{\!-\!1})^S\!+\!
P_{k}^{\!-\!1}C_{i}^{\tau}(P_{k}\!+\!\beta
D)C_{i}P_{k}^{\!-\!1}
\!+\!P_{k}^{\!-\!1}\bar Q_{ijk} P_{k}^{\!-\!1}\!<\!0.  \label{ae5}
\end{eqnarray}
One can see that the term $A_{ij}P_{k}^{-1}$
in (\ref{ae5}) makes it impossible for
obtaining the control gain $K_j$ via  change of variables.  To overcome the obstacle in
determining the control gains $K_j$, the
following lemma is adopted, based on which the term $AP$ can be decoupled by the additional matrix variable $\Omega$.
\end{remark}

\begin{lemma} \cite{GK2006} Let matrix $A \in \mathbb{R}^{n\times n}$ be given. If there exist
matrices $\Omega,W \in \mathbb{R}^{n\times n}$ and positive matrices
$P,R \in \mathbb{R}^{n\times n} $ such that the inequality
\begin{eqnarray*}
\left[
\begin{array}{ccc}
(A\Omega)^S+R+W & A\Omega & 0\\
 \ast & -\Omega^S & P-\Omega\\
 \ast & \ast &-R
\end{array}
\right] < 0\label{c21}
\end{eqnarray*}
holds, then we have $W+(AP)^S< 0$.
\end{lemma}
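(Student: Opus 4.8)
The plan is to use the Schur complement to unpack the block matrix inequality, and then to match the resulting terms against a Lyapunov-type completion-of-squares argument. First I would apply the Schur complement with respect to the $(3,3)$ block $-R < 0$: the displayed $3\times 3$ inequality is equivalent to
\begin{equation*}
\left[
\begin{array}{cc}
(A\Omega)^S+R+W & A\Omega\\
\ast & -\Omega^S
\end{array}
\right]
+
\left[
\begin{array}{c}
0\\
P-\Omega
\end{array}
\right] R^{-1}
\left[
\begin{array}{cc}
0 & (P-\Omega)^{\tau}
\end{array}
\right] < 0 .
\end{equation*}
Carrying out the rank-one-block update, the $(1,1)$ block is unchanged, the $(1,2)$ block stays $A\Omega$, and the $(2,2)$ block becomes $-\Omega^S + (P-\Omega)R^{-1}(P-\Omega)^{\tau}$. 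Since $R>0$, the term $(P-\Omega)R^{-1}(P-\Omega)^{\tau}\geq 0$, so we may drop it to obtain (using that $-\Omega^S + (\cdot)\geq -\Omega^S$ makes the inequality \emph{easier}, hence the reduced matrix with $-\Omega^S$ replaced is $\leq$ the above — care with sign direction here) the weaker-looking but still valid inequality
\begin{equation*}
\left[
\begin{array}{cc}
(A\Omega)^S+R+W & A\Omega\\
\ast & -\Omega^S+(P-\Omega)R^{-1}(P-\Omega)^{\tau}
\end{array}
\right] < 0 .
\end{equation*}

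Next I would apply a congruence transformation by $\mathrm{diag}(I,\ \Omega^{-\tau})$, which preserves definiteness since $\Omega$ is invertible (its invertibility follows because $-\Omega^S<0$ forces $\Omega$ nonsingular). This turns the $(1,2)$ block into $A\Omega\Omega^{-1}=A$, and the $(2,2)$ block into $\Omega^{-1}(-\Omega^S)\Omega^{-\tau}+\Omega^{-1}(P-\Omega)R^{-1}(P-\Omega)^{\tau}\Omega^{-\tau} = -(\Omega^{-1})^S + (\Omega^{-1}P-I)R^{-1}(P\Omega^{-\tau}-I)$. Writing $S:=\Omega^{-1}$ for brevity, the $(2,2)$ block is $-(S^S) + (SP-I)R^{-1}(SP-I)^{\tau}$. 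A further Schur complement, this time in the $(2,2)$ direction, applied to
\begin{equation*}
\left[
\begin{array}{cc}
(A\Omega)^S+R+W & A\\
\ast & -(S^S)+(SP-I)R^{-1}(SP-I)^{\tau}
\end{array}
\right]<0
\end{equation*}
reduces the problem to showing the $(1,1)$ block minus the Schur term is negative; the upshot I aim for is that $(A\Omega)^S + R + W + A\big[(S^S)-(SP-I)R^{-1}(SP-I)^{\tau}\big]^{-1}A^{\tau}<0$ and that the bracketed inner matrix is bounded above in a way that makes the correction term dominate $R$, ultimately yielding $W + (AP)^S < 0$.

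The cleaner route — and the one I would actually push through — avoids inverting the $(2,2)$ block: multiply the reduced $2\times 2$ inequality on the left by the row vector $\big[\,y^{\tau}\ \ (P\,y)^{\tau}\,\big]$ for arbitrary $y\in\mathbb{R}^n$ and on the right by its transpose, i.e. test the negative-definite matrix against the specific subspace $\{(y,\ \Omega^{-\tau}? )\}$. Concretely, choosing the test vector so that the second component cancels the $(P-\Omega)$ cross-term: pick $v=\big[\,y\ ;\ z\,\big]$ with $z$ chosen as $z = \Omega^{-\tau}(P^{\tau}? )y$; expanding $v^{\tau}(\cdot)v<0$ and using $\Omega^S = \Omega+\Omega^{\tau}>0$, the $z$-dependent quadratic form is minimized at a value that leaves precisely $y^{\tau}\big[(A\Omega)^S+R+W - (\text{something}\succeq 0)\big]y < 0$, and then a second application absorbing $A\Omega$ against $AP$ via the identity $A\Omega + A(P-\Omega) = AP$ produces $y^{\tau}(W+(AP)^S)y<0$ for all $y$.

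The main obstacle I anticipate is bookkeeping the sign directions through the two Schur complements and making sure the correct "$\geq 0$" term is the one being dropped (dropping a term from a negative-definite matrix is only valid when subtracting a PSD matrix, not adding one), together with choosing the test vector that simultaneously (i) kills the $R^{-1}$ cross-coupling and (ii) converts $A\Omega$ into $AP$. Since this is exactly the content of Lemma~4 in \cite{GK2006}, I would, for the purposes of this paper, simply cite that reference rather than reproduce the full manipulation; the sketch above indicates why the decoupling works and how the auxiliary variable $\Omega$ (a "slack" copy of $P$) removes the obstructive product $A_{ij}P_k^{-1}$ that appears in (\ref{ae5}).
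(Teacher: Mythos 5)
First, note that the paper does not prove this lemma at all: it is stated with the citation \cite{GK2006} and used as a black box, so your closing decision to ``simply cite that reference'' matches what the paper does. The issue is with the sketch you give on the way there. Your first step (Schur complement on the $(3,3)$ block, yielding the reduced $2\times 2$ inequality with $(2,2)$ block $-\Omega^S+(P-\Omega)R^{-1}(P-\Omega)^{\tau}$) is correct, and the identity $A\Omega+A(P-\Omega)=AP$ that you mention is indeed the punchline. But everything in between is a genuine gap: the congruence by $\mathrm{diag}(I,\Omega^{-\tau})$ followed by a second Schur complement leads to an expression you admit you cannot close (``the upshot I aim for''), and the ``cleaner route'' leaves the test vector literally undetermined ($z=\Omega^{-\tau}(P^{\tau}?)y$). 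Neither branch of the argument actually reaches $W+(AP)^S<0$.

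The correct completion is shorter than either detour and needs neither $\Omega^{-1}$ nor a second Schur complement. Multiply your reduced $2\times 2$ inequality on the left by $T=[\,I\ \ A\,]$ and on the right by $T^{\tau}$ (this preserves strict negativity because $T^{\tau}$ has full column rank). The $\Omega$-terms cancel, since $A\Omega A^{\tau}+A\Omega^{\tau}A^{\tau}+A(-\Omega^S)A^{\tau}=0$, leaving
\begin{equation*}
(A\Omega)^S+R+W+A(P-\Omega)R^{-1}(P-\Omega)^{\tau}A^{\tau}<0.
\end{equation*}
Now apply the completion of squares $Z+Z^{\tau}\leq R+ZR^{-1}Z^{\tau}$ (which follows from $(ZR^{-1/2}-R^{1/2})(ZR^{-1/2}-R^{1/2})^{\tau}\geq 0$) with $Z=A(P-\Omega)$ to obtain $(A(P-\Omega))^S\leq R+A(P-\Omega)R^{-1}(P-\Omega)^{\tau}A^{\tau}$; combining this with the display and $(A\Omega)^S+(A(P-\Omega))^S=(AP)^S$ gives $W+(AP)^S<0$. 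Note also that, contrary to the remark in your first step, you must \emph{not} drop the positive semidefinite term $(P-\Omega)R^{-1}(P-\Omega)^{\tau}$: the resulting inequality would still be valid, but that term is the only place $P$ enters, so discarding it makes the conclusion unreachable.
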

 Using Lemma $3$ to (\ref{ae5}),  a set of inequalities can be obtained, in which
 both $P_{k}$ and $P_{k}^{-1}$ appear simultaneously. Moreover,
 the structure constraint  (\ref{e1h+}) should be
satisfied for $P_{k}$. These make the determination of
 control gains $\{K_j\}_{j\in \mathcal{S}}$  difficult.
Motivated by cone complementarity linearization algorithm
in \cite{LElGhaoui1997}, the determination for the local gains $\left\{ K_i \right\}_{i\in\mathcal{S}}$
can be converted to a quadratic optimization problem with linear inequality constraints. So  Theorem $3$ can be
established.
\begin{theorem} \label{123a}
Under Assumption $1$, the equilibrium point of the It\^{o} closed-loop stochastic
system in (\ref{e6}) is stochastically asymptotically stable in
the large, if there exist matrices $\bar{P}$
and $D_{k}$ in the forms of (\ref{a3}), diagonal matrices $D$
and $\bar{D}$, symmetric matrices  $\bar P_{k}$, $R_{ijk}$, $Q_{ijk}$ with
$Q_{ijk}^{\tau}=Q_{ijk}=Q_{jik}$,  as well as  matrices
$M_{j}$, $\Omega_{j}$ being a set of solutions to the
quadratic optimization problem:
\begin{eqnarray}\mathrm{Min} \bigg \{ \sum^s_{k=1}  tr\left[ \bar
P_{k}(D_{k}+\bar{P})\right]+ tr(\bar{D}D) \bigg\}
\label{c230}
\end{eqnarray}
subject to the following constraints for all $i,j,k \in \mathcal{S}$:
\begin{align}
&  D\!-\!D_{j}\!\geq\! 0, \ \bar{P}\!+\!D_{k}\!>\!0, \ \left[
\begin{array}{cc}
\bar P_{k} & I \\
I & D_{k}+\bar{P}
\end{array}
\right] \!\geq \!0
\label{c23}\\
%\end{align}
%\begin{align}
&\Theta_2\!=\!\left[
\begin{array}{ccccc}
 Q_{11k} & \ldots & Q_{1sk}\\
\vdots & \ddots & \vdots\\
 Q_{s1k} & \ldots & Q_{ssk}
\end{array}
\right] \!>\! 0, \left[
\begin{array}{cc}
\bar{D} & I \\
I & D
\end{array}
\right] \!\geq\! 0\label{c23+}\\
%\end{align}
%\begin{align}
&\left[
\begin{array}{cccccc}
\Lambda_{ijk} & \Upsilon_{ij} & 0 & \bar P_{k}C_{i}^{\tau}& \bar P_{k}C_{i}^{\tau}\\
\ast & \!-\!\Omega_{j}^S & \bar P_{k}\!-\!\Omega_{j} &0&0\\
\ast & \ast & \!-\!R_{ijk}& 0 &0\\
\ast & \ast & \ast& \!-\!\bar P_{k}&0\\
\ast & \ast & \ast&\ast&\!-\!\beta^{\!-\!1}\bar{D}
\end{array}
\right] \!<\!0  \label{c23-}\\
&\Upsilon_{ij}\!=\! A_{i}\Omega_{j}\!+\!B_{i}M_{j}, \ \ \Lambda_{ijk}\!=\!\Upsilon_{ij}^S\!+\!R_{ijk}\!+\!Q_{ijk} \label{1-1}
\end{align}
where $\beta$ is given in (\ref{bound2}).
 In this case, the fuzzy local feedback gains are given by
\begin{equation}
K_{j}=M_{j}\Omega_{j}^{-1}.  \label{c22}
\end{equation}
\end{theorem}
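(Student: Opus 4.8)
The plan is to show that Theorem~3 is Theorem~2 in disguise: the optimization problem (\ref{c230})--(\ref{c23-}) is obtained from the conditions of Theorem~2 by congruence transformations, one application of Lemma~3 to decouple the product $P_kA_{ij}$, and a cone-complementarity relaxation of the nonconvex constraints $\bar P_k=P_k^{-1}$, $\bar D=D^{-1}$; so it suffices to reverse these steps at a solution of the optimization problem and then invoke Theorem~2 with the closed-loop data $A_{ij}=A_i+B_iK_j$. First I would record what negative definiteness of (\ref{c23-}) forces: every diagonal block of a negative definite matrix is negative definite, so $\Omega_j^S>0$, $R_{ijk}>0$, $\bar P_k>0$, and (using $\beta>0$ from (\ref{bound2})) $\bar D>0$. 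In particular $\Omega_j$ is nonsingular, hence $K_j=M_j\Omega_j^{-1}$ in (\ref{c22}) is well defined, and substituting $M_j=K_j\Omega_j$ into (\ref{1-1}) gives $\Upsilon_{ij}=A_i\Omega_j+B_iK_j\Omega_j=A_{ij}\Omega_j$.

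The next step exploits optimality. Put $P_k:=\bar P+D_k>0$ (feasible by (\ref{c23})); a Schur complement turns the last matrix inequality of (\ref{c23}) into $\bar P_k\ge P_k^{-1}$, so $\mathrm{tr}[\bar P_k(D_k+\bar P)]\ge n$ with equality iff $\bar P_k=P_k^{-1}$. Likewise $D>0$ (the structure of $\bar P$ and $D_k$ together with $\bar P+D_k>0$ forces $D_k>0$, and $D\ge D_k$ by (\ref{c23})), the last inequality of (\ref{c23+}) gives $\bar D\ge D^{-1}$, and $\mathrm{tr}(\bar D D)\ge n$ with equality iff $\bar D=D^{-1}$. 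Thus the objective (\ref{c230}) is bounded below by $(s+1)n$, and at a solution of the optimization problem, i.e. one attaining this bound — precisely the regime in which the cone-complementarity linearization is exact — one has $\bar P_k=P_k^{-1}$ for every $k\in\mathcal{S}$ and $\bar D=D^{-1}$.

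With these identities, the core computation is to eliminate the fourth and fifth block rows and columns of (\ref{c23-}). Since the $(4,5)$ and $(5,4)$ blocks vanish and the fourth and fifth columns couple only to the first block, the Schur complement with respect to $\mathrm{diag}(-\bar P_k,-\beta^{-1}\bar D)$ alters only the $(1,1)$ entry, adding $\bar P_kC_i^{\tau}\bar P_k^{-1}C_i\bar P_k+\beta\bar P_kC_i^{\tau}\bar D^{-1}C_i\bar P_k=\bar P_kC_i^{\tau}(P_k+\beta D)C_i\bar P_k$. The surviving $3\times3$ block inequality is exactly the hypothesis of Lemma~3 with $A\leftarrow A_{ij}$, $\Omega\leftarrow\Omega_j$, $R\leftarrow R_{ijk}$, $P\leftarrow\bar P_k$, and $W\leftarrow Q_{ijk}+\bar P_kC_i^{\tau}(P_k+\beta D)C_i\bar P_k$, because $\Lambda_{ijk}=\Upsilon_{ij}^S+R_{ijk}+Q_{ijk}=(A_{ij}\Omega_j)^S+R_{ijk}+Q_{ijk}$. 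Lemma~3 then yields $(A_{ij}\bar P_k)^S+\bar P_kC_i^{\tau}(P_k+\beta D)C_i\bar P_k+Q_{ijk}<0$; pre- and post-multiplying by $P_k=\bar P_k^{-1}$ and setting $\bar Q_{ijk}:=P_kQ_{ijk}P_k$ produces $(P_kA_{ij})^S+C_i^{\tau}(P_k+\beta D)C_i+\bar Q_{ijk}<0$, which is (\ref{c61++}).

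It then remains to check the other hypotheses of Theorem~2 for the data $A_{ij}$: $P_k=\bar P+D_k>0$ and $D-D_k\ge0$ hold by (\ref{c23}); $\bar Q_{ijk}^{\tau}=\bar Q_{ijk}$ and $\bar Q_{ijk}=\bar Q_{jik}$ follow from $Q_{ijk}^{\tau}=Q_{ijk}=Q_{jik}$ and symmetry of $P_k$; and the block matrix $[\bar Q_{ijk}]_{i,j\in\mathcal{S}}$ equals $\mathrm{diag}(P_k,\dots,P_k)\,\Theta_2\,\mathrm{diag}(P_k,\dots,P_k)$, which is positive definite since $\Theta_2>0$ by (\ref{c23+}) and $\mathrm{diag}(P_k,\dots,P_k)$ is invertible. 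All hypotheses of Theorem~2 being met, the equilibrium of the closed-loop system (\ref{e6}) is stochastically asymptotically stable in the large. I expect the main obstacle to be the bookkeeping of the two simultaneous Schur complements in the five-block matrix (\ref{c23-}), together with — more conceptually — the justification that at an optimizer the slack variables $\bar P_k,\bar D$ may be replaced by the true inverses $P_k^{-1},D^{-1}$; without this replacement the structured matrix $P_k=\bar P+D_k$ required by Theorem~2 is not recovered, and the reduction breaks down.
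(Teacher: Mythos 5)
Your proposal is correct and follows essentially the same route as the paper's proof: Schur complements on the five-block LMI, Lemma~3 to decouple $A_{ij}\bar P_k$, congruence by $P_k$ with $\bar Q_{ijk}:=P_kQ_{ijk}P_k$, and reduction to Theorem~2. The only difference is that you spell out the trace lower bound $(s+1)n$ and the equality cases $\bar P_k=P_k^{-1}$, $\bar D=D^{-1}$ at the optimum, whereas the paper delegates this step to the citation of the cone complementarity linearization algorithm; your version is a more explicit account of the same argument.
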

\vspace{-0.3cm}
\begin{proof}
 Assume that there exist matrices $D_{k}$ and $\bar{P}$
in the form of (\ref{a3}), and matrices $M_{j}$, $\Omega_{j}$, $\bar
P_{k}$, $R_{ijk}$, $\{Q_{ijk}: Q_{ijk}^{\tau}=Q_{ijk}=Q_{jik}\}$
being a set of solutions to the minimization problem
 (\ref{c230})-(\ref{1-1}). It follows from the cone complementarity linearization algorithm
in \cite{LElGhaoui1997} that the
 matrices $D_{k}$,
 $\bar{P}$, $M_{j}$,
$\Omega_{j}$, $R_{ijk}$ and  $ Q_{ijk}$ satisfy
\begin{eqnarray}
\left.
\begin{array}{c}
\left [ \begin{array}{ccccccc}
\Lambda_{ijk}& \Upsilon_{ij} & 0 &  X_{k}C_{i}^{\tau}& X_{k}C_{i}^{\tau} \\
\ast & \!-\!\Omega_{j}^S & X_{k}\!-\!\Omega_{j} &0&0\\
\ast & \ast & \!-\!R_{ijk}& 0 &0\\
\ast & \ast & \ast& \!-\!X_{k}&0 \\
\ast & \ast & \ast&\ast&\!-\!D^{-1}\beta^{-1}
\end{array}
\right]\!<\!0 %\label{d11}\\
\\
\Theta_2\!>\! 0,\
\begin{array}{c}
\bar{P}\!+\!D_{k}\!>\!0, \ \  D\!-\!D_{j}\!\geq\! 0%\label{z11}
\end{array}
\end{array}
\right\}\label{cbv23}
\end{eqnarray}
where $X_{k}=P_{k}^{-1}$ with $P_{k}=\bar{P}+D_{k}$. Recalling
(\ref{e61}) and (\ref{c22}), the first expression of (\ref{cbv23})
becomes\\
\begin{equation}
\left [ \begin{array}{ccccc}
\tilde{\Lambda}_{ijk}& A_{ij}\Omega_{j} & 0 &  X_{k}C_{i}^{\tau}& X_{k}C_{i}^{\tau}  \\
\ast &\! -\!\Omega_{j}^S & X_{k}\!-\!\Omega_{j} &0&0\\
\ast & \ast & \!-\!R_{ijk}& 0 &0\\
\ast & \ast & \ast& \!-\!X_{k}&0\\
\ast & \ast & \ast& \ast &\!-\!D^{-1}\beta^{-1}
\end{array}
\right]\!<\!0  \label{acbv23}
\end{equation}
where $\tilde{\Lambda}_{ijk}=(A_{ij}\Omega_{j})^S+R_{ijk}+Q_{ijk}$, which together with the Schur complement equivalence yields
\begin{equation}
\left [ \begin{array}{ccc}
\Phi_{ijk}& A_{ij}\Omega_{j} & 0 \\
\ast & -\Omega_{j}-\Omega_{j}^{\tau} & X_{k}-\Omega_{j} \\
\ast & \ast & -R_{ijk}
\end{array}
\right] <0 \label{+acbv23}
\end{equation}
where
\begin{align*}
\Phi_{ijk}\!=\! (A_{ij}\Omega_{j})^S\!+\!R_{ijk}\! +\!Q_{ijk}\!+ \!X_{k}C_{i}^{\tau}(X_{k}^{\!-\!1}\!+\!\beta D)C_{i}X_{k}.
\end{align*}
Applying Lemma 3 to {(\ref{+acbv23})}, one has
\begin{align}
(A_{ij}X_{k})^S+Q_{ijk}+X_{k}C_{i}^{\tau}(X_{k}^{-1}+\beta
D)C_{i}X_{k}<0.\label{d12}
\end{align}
Pre-multiplying and post-multiplying $P_{k}$ to {(\ref{d12})} lead
to
\begin{equation}
(P_{k}A_{ij})^S\!+\!P_{k}Q_{ijk}P_{k}\!+\!
C_{i}^{\tau}(P_{k}+\beta
D)C_{i}\!<\!0.\label{e621}
\end{equation}
Letting
\begin{equation}
P_{k}Q_{ijk}P_{k}=\bar Q_{ijk} \label{ilvf++}
\end{equation}
and using $Q_{ijk}^{\tau}=Q_{ijk}=Q_{jik}$, one can easily check
that
\begin{equation*}
\bar Q_{ijk}^{\tau}=P_{k}Q_{ijk}^{\tau} P_{k} = P_{k}Q_{ijk}P_{k}
 =P_{k}Q_{jik}P_{k}=\bar Q_{jik}=
\bar Q_{ijk}.
\end{equation*}
Inequalities {(\ref{e621})} together with {(\ref{ilvf++})}  gives
\begin{equation}
(P_{k}A_{ij})^S+\bar Q_{ijk}+
C_{i}^{\tau}(P_{k}+\beta
D)C_{i}<0.\label{e62}
\end{equation}
It follows from (\ref{ilvf++}) and the first expression of (\ref{c23+})
that
\begin{align}
&\left[
\begin{array}{ccccc}
\bar Q_{11k} & \ldots &\bar Q_{1sk}\\
\vdots & \ddots & \vdots\\
\bar Q_{s1k} & \ldots &\bar Q_{ssk}
\end{array}
\right]\!=\!
\left[
\begin{array}{ccccc}
 P_{k}Q_{11k}P_{k} & \ldots & P_{k}Q_{1sk}P_{k}\\
\vdots & \ddots & \vdots\\
 P_{k}Q_{s1k}P_{k} & \ldots & P_{k}Q_{ssk}P_{k}
\end{array}
\right]\nonumber\\
%\end{align}
%\begin{align}
&=%\setlength{\arraycolsep}{0.2pt}
\left[
\begin{array}{ccccc}
 P_{k} &   &  & \\
 &  & \ddots & \\
 &  &  & P_{k}
\end{array}
\right]
\Theta_2
\left[
\begin{array}{ccccc}
 P_{k} &   &  & \\
 &  & \ddots & \\
 &  &  & P_{k}
\end{array}
\right]
> 0. \label{ylm+}
\end{align}
The conditions $P_{k}\!=\!\bar{P}\!+\!D_{k}>0$, $D\!-\!D_{j}\geq 0$ in
(\ref{c23}) together with (\ref{e62}) and (\ref{ylm+}) prove the
conclusion by  Theorem \ref{the+*-}.
\end{proof}
\begin{remark} One idea behind Theorem
\ref{123a}  is to decouple $A_{ij}X_{k}$ in (\ref{e621}) by introducing the
additional matrix variables $\Omega_{j}$ based on Lemma
3. Thus, by tolerating some conservativeness, the
nonlinear inequalities (\ref{d12}) can be expressed as
inequalities (\ref{acbv23}), where $B_{i}K_{j}\Omega_{j}$ in
$A_{ij}\Omega_{j}$ makes  the determination of the control gains $K_j$
feasible by change of variables. Another idea
behind this theorem   is that the nonlinear inequalities (\ref{cbv23})
in which both $X_{k}$ ( $X_{k}\!=\!P_{k}^{-1}$) and
$P_{k}$ appear, can be converted into
 the minimization problem  (\ref{c23})-(\ref{1-1}),
which can be solved by using the cone complementarity linearization
algorithm. However, from Lemma 3 one can see that
inequalities (\ref{acbv23}) are a sufficient condition for the
inequalities (\ref{d12}) holding, which means that the
existence of solutions for (\ref{d12}) does not guarantee the
existence of solutions for (\ref{acbv23}), so the
transformation from (\ref{d12}) to (\ref{acbv23})
leads to some conservativeness. In addition, it should be pointed
out that for fixed $j$ in (\ref{c23-})-(\ref{1-1}), the same
$\Omega_{j}$ is required to satisfy $s^{2}$ inequalities when $i,k$
range from $1$ to $s$, which inevitably results in some
conservativeness.
\end{remark}
%\begin{remark}
%Theorem $3$  shows that designing controller in (\ref{d2}) for the It\^{o} stochastic continuous T-S model in (\ref{a1}) can be reduced to the minimization problem of which the penalty function is (\ref{c230}) with linear constraints in (\ref{c23})-(\ref{1-1}).
%\end{remark}
Solving the quadratic optimization problem in Theorem $3$ is an iterative procedure  which is summarized in the following Iteration algorithm $1$.
It is clear when LMIs  (\ref{c23})-(\ref{1-1}) have no solution, so does the minimization problem  (\ref{c230})-(\ref{1-1}). For this reason, it is assumed that LMIs  (\ref{c23})-(\ref{1-1}) is feasible in this iteration algorithm.\\
Iteration algorithm $1$:
\begin{itemize}
\item [(1)] Given $\epsilon>0$ and a natural number $n_{max}$, determine a feasible solution
\[ I_0=\left(\bar{P}_{10},\cdots,\bar{P}_{s0},D_{10},\cdots,D_{s0},\bar{P}(0),\bar{D}_{0},D(0), \hat{I}_0
\right)\] to LMIs  (\ref{c23})-(\ref{1-1}) where $\hat{I}$  stands for other matrix variables in Theorem $3$ and the index 0's in these symbols denote these data appear in the initial iterative step. The iteration error is defined by
\[
E_0=\sum^s_{k=1}  tr\left[ \bar P_{k0}(D_{k0}+\bar{P}(0))\right]+ tr(\bar{D}_{0}D(0))-(s+1)n.
\]
If $\mid E_0 \mid < \epsilon$, then exit; or else  set $j=0$ and go to step $(2)$.
\item [(2)] For $j\geq 1$, assume that the data
\[ I_{j}=\left(\bar{P}_{1j},\cdots,\bar{P}_{sj},D_{1j},\cdots,D_{sj},\bar{P}(j),\bar{D}_{j},D(j), \hat{I}_j
\right)\]
are obtained in the $j$th iteration step. For data $I_j$ obtained in the $j$th  step, solve the LMI problem:
\begin{eqnarray*}&&\mathrm{Min} \left \{ \sum^s_{k=1}tr[\bar{P}_{kj}(D_{k}+\bar{P})+\bar{P}_{k}(D_{kj}+\bar{P}(j))]\right.\\
&&\left.+tr(\bar{D}_{j}D+\bar{D}D(j)) \right\}
\label{cc2}
\end{eqnarray*}
subject to (\ref{c23})-(\ref{1-1}) for determining its matrix variables $\{\bar{P}_i,\
   D_i, \bar{P}, \hat{I}\}_{i=1}^s$. Letting
\[D_{i,j+1}=D_i,\ \bar{P}_{i,j+1}=\bar{P}_i,\ \bar{P}(j+1)=\bar{P},\ \hat{I}_{j+1}=\hat{I}\]
  then we can obtain the data
 \begin{eqnarray*}
 I_{j\!+\!1}\!=\!(\bar{P}_{1,j\!+\!1}\cdots\bar{P}_{s,j\!+\!1},D_{1,j\!+\!1}\cdots
 D_{s,j\!+\!1},\bar{P}(j\!+\!1), \\
 \bar{D}_{j\!+\!1},D(j\!+\!1), \hat{I}_{j\!+\!1}).
\end{eqnarray*}
\item [(3)] The  $j$th iteration error is defined by
\begin{eqnarray*}
E_j&=&\sum^s_{k=1}  tr\left[ \bar P_{k,j+1}(D_{k,j+1}+\bar{P}(j+1))\right]\\&+& tr(\bar{D}_{j+1}D(j+1))-(s+1)n.
\end{eqnarray*}
If $\mid E_j \mid < \epsilon$ or $j= n_{max}$, then exit; or else  set $j=j+1$ and go to step $(2)$.
 \end{itemize}

\section{Numerical Examples}
\begin{example}
The It\^{o} stochastic T-S model based
control system has the following
fuzzy rules:
\par\setlength\parindent{3em}  \textbf{$R_{1}$}: $\ \ $If $x_{1}$ is $ F_{1}^{1}$ and $x_{2}$ is $
F_{2}^{1}$, then \newline \indent ~~~~~~~$dx=(A_{1}x+B_{1}u)dt+C_{1}x dW(t)$% \par\setlength\parindent{1.5em}
\newline \indent
  \textbf{$R_{2}$}:  $\ \ $If $x_{1}$ is $ F_{1}^{1}$ and $x_{2}$ is $
F_{2}^{2}$,  then \newline \indent ~~~~~~~ $dx=(A_{2}x+B_{2}u)dt+C_{2}x dW(t)$
\newline \indent \textbf{$R_{3}$}:  $\ \ $If $x_{1}$ is $ F_{1}^{2}$ and $x_{2}$ is $
F_{2}^{1}$, then \newline \indent ~~~~~~~  $dx=(A_{3}x+B_{3}u)dt+C_{3}x dW(t)$
\newline \indent \textbf{$R_{4}$}: $\ \ $If $x_{1}$ is $ F_{1}^{2}$ and $x_{2}$ is $
F_{2}^{2}$, then \newline \indent ~~~~~~~ $dx=(A_{4}x+B_{4}u)dt+C_{4}x dW(t)$
%\vspace{0.2cm}
\newline  where all the known matrices are listed as follows:
\end{example}
%\vspace{-1cm}
\begin{align*}
A_{1}&\!=\!\left[
\begin{array}{cc}
  \!-\!0.5  &       0\\
   \!-\!1   & 1.59
\end{array} \right],
A_{2}\!=\!\left[
\begin{array}{cc}
 \!-\!0.8  &  1\\
  0  &  1
\end{array} \right], B_{1}\!=\!\left[
\begin{array}{c}
\!-\!2\\
 3
\end{array} \right]\\
%\end{align*}
%\begin{align*}
A_{3}&\!=\!\left[
\begin{array}{cc}
  \!-\!0.9   & 1.09\\
  0   & 0.8
\end{array} \right],
A_{4}\!=\!\left[
\begin{array}{cc}
\!-\!1  &  1\\
0 &   0.4
\end{array} \right],
B_{2}\!=\!\left[
\begin{array}{c}
\!-\!2\\
 2
\end{array} \right],\\%\\
%\end{align*}
%\begin{align*}
 C_{1}&\!=\!\left[
\begin{array}{cc}
0.2 & 0\\
0 & \!-\!0.1
\end{array} \right],
C_{2}\!=\!\left[
\begin{array}{cc}
0.3 & 0\\
0 & \!-\!0.2
\end{array} \right], B_{3}\!=\!\left[
\begin{array}{c}
\!-\!1\\
 3
\end{array} \right]
\\
C_{3}&\!=\!\left[
\begin{array}{cc}
0.2 & 0\\
0 & \!-\!0.3
\end{array} \right], C_{4}\!=\!\left[
\begin{array}{cc}
 0.35   &      0\\
 0  & \!-\!0.1
\end{array} \right],
B_{4}\!=\!\left[
\begin{array}{c}
\!-\!1.5\\
 2.6
\end{array} \right].
\end{align*}
Then all ordinal numbers $\alpha_{ij},i=1, \cdots,4;j=1,2$ read
%\[\alpha_{11}=1,\quad \alpha_{12}=1,\quad\alpha_{21}=2,\quad \alpha_{22}=1,\quad \alpha_{31}=2,\quad
%\alpha_{32}=2,\quad \alpha_{41}=1,\quad \alpha_{42}=2. \]
{\setlength\abovedisplayskip{1pt}
\setlength\belowdisplayskip{5pt}
\begin{align*}
 &\alpha_{11}=1,\quad\alpha_{21}=1,\quad\alpha_{31}=2,\quad\alpha_{41}=2\\
 & \alpha_{12}=1,\quad\alpha_{22}=2,\quad\alpha_{32}=1,\quad\alpha_{42}=2
\end{align*}
with $s_1=s_2=2$. It is assumed that the membership functions
$\left\{w_j^i(x_j),\  i=1, 2;\ j=1,2 \right\}$ of the fuzzy sets
$\{F^i_j:  i,j=1,2\}$ in the fuzzy rules are given by
{\setlength\abovedisplayskip{6pt}
\setlength\belowdisplayskip{6pt}
\begin{align*}
   &w_1^1(x_1)=0.0169e^{- x_1^2},\quad w_2^1(x_2) =0.0024e^{-0.05 (x_2-3)^2}\\
  & w_1^2(x_1)=1-w_1^1(x_1),\quad w_2^2(x_2) =1-w_2^1(x_2).
\end{align*}
One has by  (\ref{normalization}) that
%{\setlength\abovedisplayskip{6pt}
%\setlength\belowdisplayskip{8pt}
\begin{align*}
&\mu_{1}^{\alpha_{11}}(x_{1}) =\mu_{1}^{\alpha_{21}}(x_{1}) =w_1^1(x_1)=0.0169e^{- x_1^2}\\
&\mu_{1}^{\alpha_{31}}(x_{1}) =\mu_{1}^{\alpha_{41}}(x_{1}) =w_1^2(x_1)=1-0.0169e^{-x_1^2}\\
&\mu_{2}^{\alpha_{12}}(x_{2}) =\mu_{2}^{\alpha_{32}}(x_{2})  =w_2^1(x_2)=0.0024e^{-0.05(x_2-3)^2}\\
&\mu_{2}^{\alpha_{22}}(x_{2}) =\mu_{2}^{\alpha_{42}}(x_{2})  =w_2^2(x_2)=1-0.0024e^{-0.05(x_2-3)^2}.
\end{align*}
It can be checked that Assumption 1 is satisfied  for all the
  $\mu_j^{\alpha_{ij}}(x_j),
 i=1, \cdots,4;\ j=1,2 $ with all  upper
bounds $ \beta_{ij},\  i=1, \cdots,4;\ j=1,2 $ listed in Table $1$.  %shown at the bottom of this page%原表格处
%\vspace{-0.8cm}
\begin{center}
\begin{tabular}{|c|c|c|c|c|}
\hline
 $\beta_{11}$ & $\beta_{21}$& $\beta_{31}$&$\beta_{41}$&$\beta_{12}$ \\
\hline
$0.0125$&$0.0125$ &$0.0125$ &$0.0125$ & $0.0125$\\
\hline
$\beta_{22}$& $\beta_{32}$ & $\beta_{42}$ & $\times$   & $\times$ \\
\hline
$0.0125$& $0.0125$ & $0.0125$& $\times$   & $\times$   \\
\hline
\end{tabular}
\vspace{0.2cm}
{ \\Table 1 : Upper bounds $\beta_{ij}$}
\end{center}
%\vspace{-0.5cm}
The objective is to design a  controller such that
the resulting It\^{o} closed-loop stochastic system  is
stochastically asymptotically stable. Solving the minimization
problem (\ref{c230})-(\ref{1-1}), we obtain a  set of
solutions as follows:
{\setlength\abovedisplayskip{5pt}
\begin{align*}
P_{1}&=%\setlength{\arraycolsep}{3pt}
\left[
\begin{array}{cc}
    0.8004  &  0.1785 \\
    0.1785  &  2.4479
\end{array} \right],\
 P_{2}=%\setlength{\arraycolsep}{3pt}
 \left[
\begin{array}{cc}
    0.8217  &  0.1785 \\
    0.1785  &  2.4479
\end{array} \right] \\
P_{3}&=%\setlength{\arraycolsep}{3pt}
\left[
\begin{array}{cc}
  0.8217  &  0.1785\\
   0.1785  &  2.6144
\end{array} \right],
 P_{4}=%\setlength{\arraycolsep}{3pt}
\left[
\begin{array}{cc}
    0.8004  &  0.1785 \\
   0.1785  &  2.6144
\end{array}\right]\\
\bar P_{1}&\!=\!%\setlength{\arraycolsep}{3pt}
\left[
\begin{array}{cc}
    1.2701 &  \!-\!0.0926 \\
  \!-\!0.0926  &  0.4153
\end{array} \right],
\bar P_{2}\!=\!%\setlength{\arraycolsep}{3pt}
\left[
\begin{array}{cc}
    1.2365  & \!-\!0.0902 \\
    \!-\!0.0902  &  0.4151
\end{array} \right]\\
\bar P_{3}&\!=\!%\setlength{\arraycolsep}{3pt}
\left[
\begin{array}{cc}
   1.2353  & \!-\!0.0843 \\
   \!-\!0.0843  &  0.3883
\end{array} \right],
\bar P_{4}\!=\!%\setlength{\arraycolsep}{3pt}
\left[
\begin{array}{cc}
   1.2687  & \!-\!0.0866\\
   \!-\!0.0866 &   0.3884
\end{array}\right]\\%\allowdisplaybreaks[1]\\
 D&=%\setlength{\arraycolsep}{3pt}
 \left[
\begin{array}{cc}
     1.5380   &      0\\
      0   &  2.6231
\end{array} \right],
 \bar{D}=%\setlength{\arraycolsep}{3pt}
\left[
\begin{array}{cc}
   0.6502    &     0 \\
        0   & 0.3812
\end{array} \right]\\
\Omega_{1}&\!=\!%\setlength{\arraycolsep}{3pt}
\left[
\begin{array}{cc}
    1.7249  &  0.0649\\
   \!-\!0.2823  &  0.3543
\end{array} \right],\
 \Omega_{2}\!=\!%\setlength{\arraycolsep}{3pt}
 \left[
\begin{array}{cc}
    1.5810  &  0.1353 \\
   \!-\!0.3666  &  0.2958
\end{array} \right]\\
%\end{align*}\begin{align*}
 \Omega_{3}&\!=\!%\setlength{\arraycolsep}{3pt}
 \left[
\begin{array}{cc}
   1.7120 &   0.0859 \\
   \!-\!0.3078  &  0.3397
\end{array} \right],\
\Omega_{4}\!=\!%\setlength{\arraycolsep}{3pt}
\left[
\begin{array}{cc}
    1.9383 &   0.0697 \\
   \!-\!0.3188  &  0.3510
\end{array}\right]\\
 M_{1}&\!=\!%\setlength{\arraycolsep}{3pt}
 \left[
\begin{array}{rr}
  0.4650 &  \!-\!0.3815
\end{array} \right],\
  M_{2} \!=\!%\setlength{\arraycolsep}{3pt}
  \left[
\begin{array}{rr}
    0.3715 &  \!-\!0.3311
\end{array} \right]\\%\allowdisplaybreaks[1]\\
 M_{3}&\!=\!%\setlength{\arraycolsep}{3pt}
  \left[
\begin{array}{rr}
       0.3924  &  \!-\!0.3659
\end{array} \right],\
  M_{4}\!=\!%\setlength{\arraycolsep}{3pt}
  \left[
\begin{array}{rr}
    0.4414 &  \!-\!0.3802
\end{array}\right].
\end{align*}
By Theorem \ref{123a}, the local feedback gains are given by
\begin{align*}
K_{1}&\!=\!%\setlength{\arraycolsep}{3pt}
\left[
\begin{array}{rr}
 0.0906  & \!-\!1.0936
\end{array}
\right],\
K_{2}\!=\!%\setlength{\arraycolsep}{3pt}
\left[
\begin{array}{rr}
 \!-\!0.0222  & \! -\!1.1092
\end{array}
\right]\\
%\end{align*}\begin{align*}
K_{3}&\!=\!%\setlength{\arraycolsep}{3pt}
\left[
\begin{array}{rr}
0.0340  & \!-\!1.0855
\end{array}
\right],\
K_{4}\!=\!%\setlength{\arraycolsep}{3pt}
\left[
\begin{array}{rr}
0.0480  & \!-\!1.0927
\end{array}
\right].
\end{align*}
To illustrate the results in Theorem
\ref{123a}, Monte Carlo simulations have been carried out by using a
discretization approach \cite{higham2001,YugangNiu2005,zhouren2010}. Some initial parameters are given as follows:
simulation interval $t\in[0,T]$ with $T=15$, initial values
$x_a(0)\!=\!(\!-\!7,3)^\tau, x_b(0)\!=\!(\!-\!5,5)^\tau, x_c(0)\!=\!(\!-\!5,10)^\tau, x_d(0)\!=\!(12,10)^\tau$, normally distributed variance
$\delta t=T/N$ with $N=2^{8}$, step size $\Delta t=R \delta t$ with
$R=2$. Figures (a)-(d) depict the state responses along 2, 10, 30 and 50 individual
Wiener process paths respectively, as well as their mean values over
these paths. These figures also show the stochastic stability of the  It\^{o} closed-loop stochastic
system in (\ref{e6}).
\begin{figure}[H]
%\hspace{0.8cm}
 \subfigure[ \indent~~State responses of $x_{1}$ and $x_{2}$ along 2 paths  of  Wiener   process ]{\scalebox{1}[0.8]
{\includegraphics[width=9.0cm,height=4.2cm, trim=0 10 0
0]{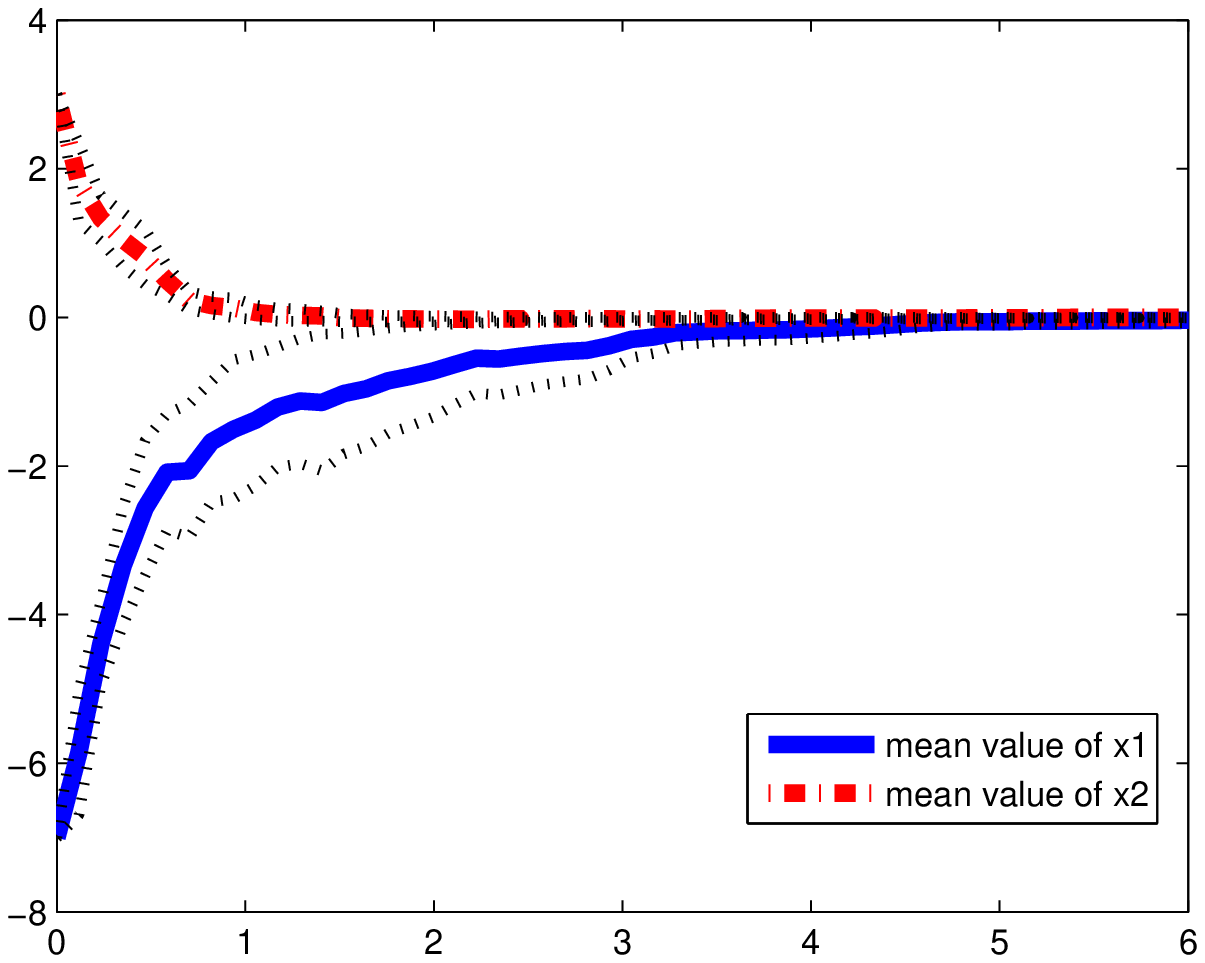}}}
\subfigure[ \indent ~~State responses of $x_{1}$ and $x_{2}$ along 10 paths  of  Wiener \  process ]{\scalebox{1}[0.8]
{\includegraphics[width=9.0cm,height=4.2cm, trim=0 10 0
0]{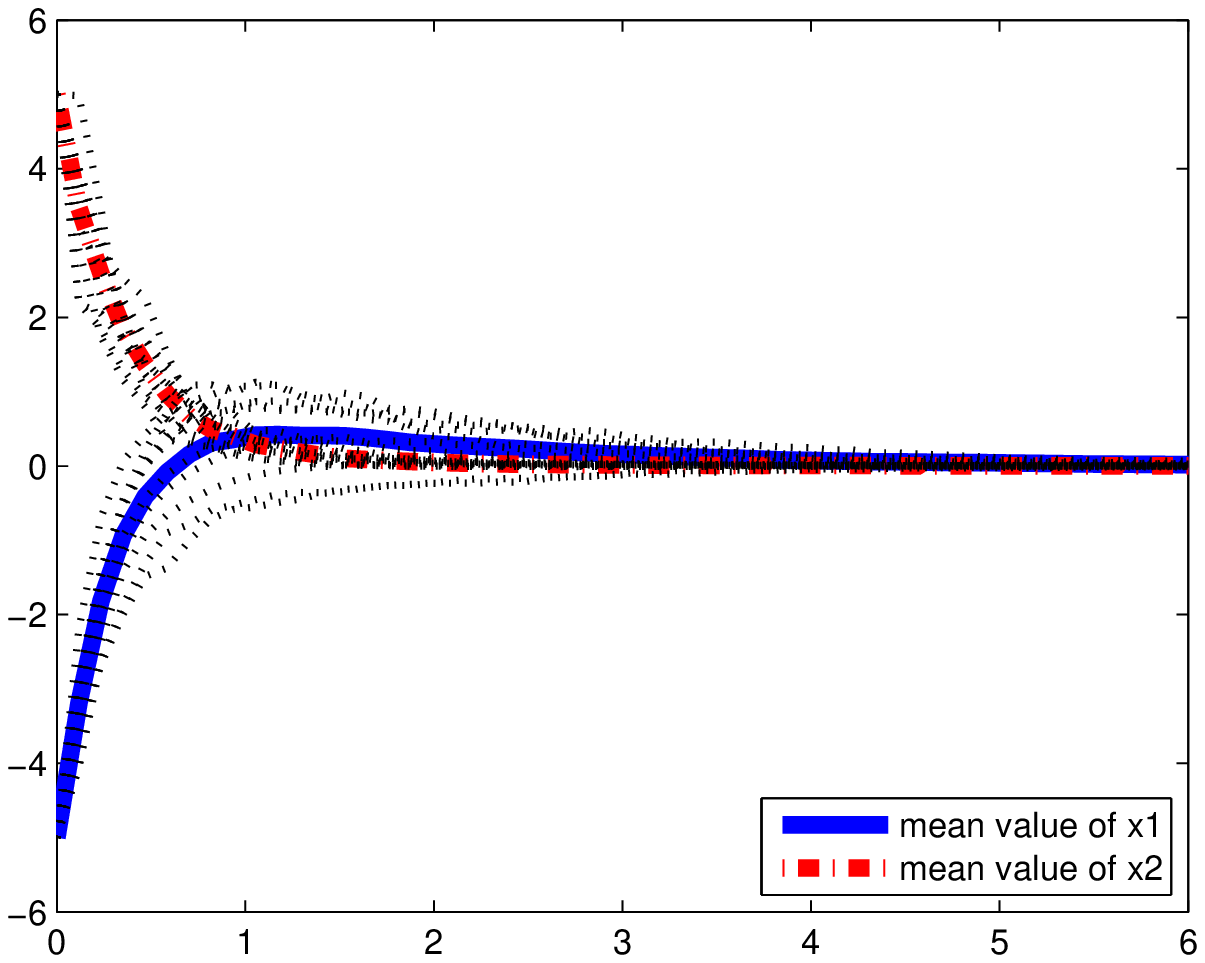}}}
\end{figure}
\begin{figure}[H]
\subfigure[ \indent~~ State responses of $x_{1}$ and $x_{2}$  along 30 paths  of  Wiener\  process ]{\scalebox{1}[0.8]
{\includegraphics[width=9.0cm,height=4.2cm, trim=0 10 0
0]{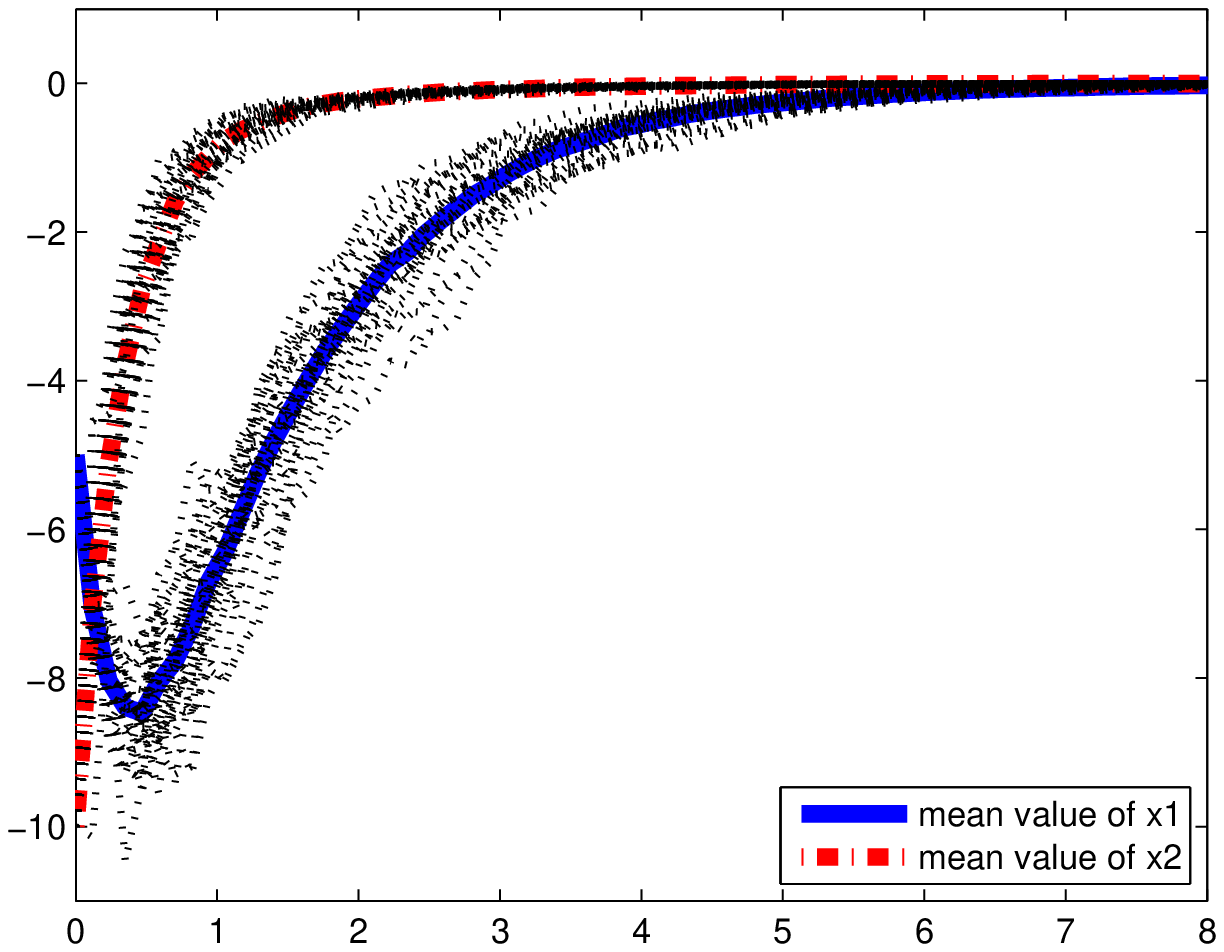}}}
 \subfigure[ \indent~~State responses of $x_{1}$ and $x_{2}$ along 50 paths  of  Wiener \  process ]{\scalebox{1}[0.8]
{\includegraphics[width=9.0cm,height=4.2cm, trim=0 10 0
0]{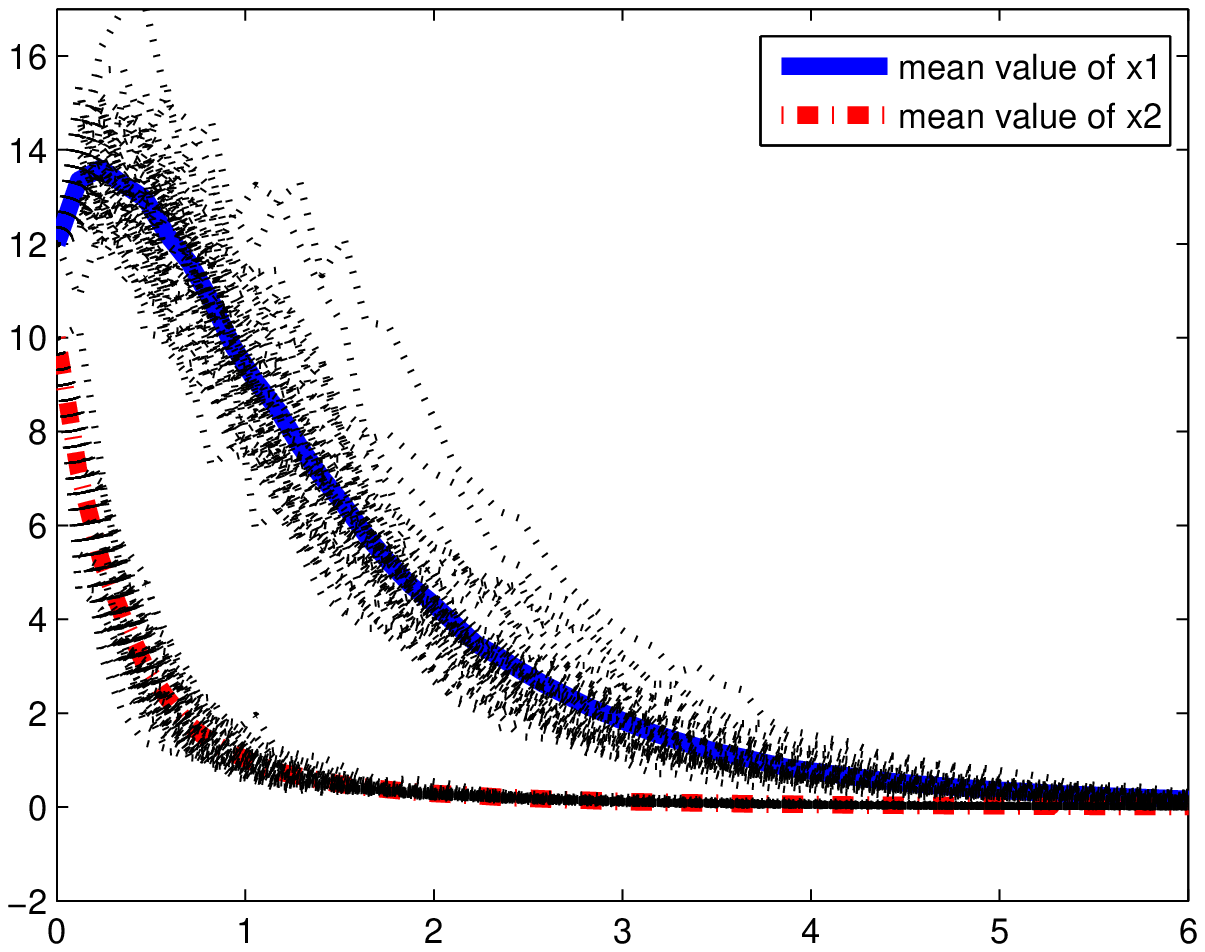}}}
\end{figure}

\section{Conclusion}
\noindent\ \ \ \ \   In this section, the main results and  the major features of this paper are summarized as follows.
Line integral approach to stability analysis and stabilization is proposed for It\^o  stochastic  T-S models. The stochastic stability analysis of  these models  needs to handle  Hessian matrix of the line integral Lyapunov  function. One can see from (\ref{aa4}) that the Hessian matrix can be decomposed into two parts: one part is the weighting sum of a set of positive definite matrices $P_i$; Another is a sum of $x$ dependent matrices $\frac{\partial h_{i}}{\partial x}x^{\tau}D_i$ with rank one. Invoking the property in (\ref{pdfra}) of matrix with rank one and applying Lemma $1$ lead to the upper bound $\beta y^\tau  D y$ of $y^\tau\sum^s_{i=1}\frac{\partial h_{i}(x)}{\partial x}x^{\tau}D_iy$. Then an estimate of  $\mathcal {L}V$  can also be achieved.
It has been shown that the line integral based analysis results of stability  are more general than those ones derived from quadratic Lyapunov function due to the fact that a quadratic Lyapunov function is a special line integral.
By facilitating the cone complementarity linearization algorithm  and introducing some additional matrix variables, a  line integral approach is  proposed for stabilization of the It\^{o} stochastic T-S model (\ref{a1}).

\noindent\ \ \ \ \ There is also some conservativeness
about the  proposed approach to stabilization, one source of which  may be that to make controller design feasible, we have to
impose certain restrictions on the additional matrix variables. Furthermore, using Lemma 3 to derive the stabilization result in  Theorem \ref{123a} also results in some conservativeness.

\noindent\ \ \ \ \ By using the techniques in this paper, the results on the  stability and
stabilization of the It\^{o} stochastic T-S model  can
be readily extended to the fuzzy systems with uncertainties of the
norm-bounded or linear fractional types. In addition, the problem of
output feedback control for this class of systems with (or without)
time-delay deserves further attention.


\begin{thebibliography}{1}
	
\bibitem{feng06}
G. Feng, ``A survey on analysis and design of model-based fuzzy
control systems," \emph{IEEE Trans. Fuzzy Syst.,} vol. 14, no. 5, pp. 676-697, Oct. 2006.
	
\bibitem{PaoTsunLin2009}
P. -T. Lin, C. -H. Wang and T. -T. Lee,  ``Time-optimal control of T-S
fuzzy models via Lie algebra," \emph{IEEE Trans. Fuzzy Syst.,} vol. 17, no. 4, pp. 737-749, Aug. 2009.

\bibitem{sa07}
A. Sala and  C. Arino,  ``Relaxed stability and performance conditions for
Takagi-Sugeno fuzzy systems with knowledge on membership-function
overlap," \emph{IEEE Trans. Syst., Man, Cybern. B, Cybern,} vol. 37, no. 3, pp. 727-732, Jun. 2007.

\bibitem{ta85}
T. Takagi and  M. Sugeno,   ``Fuzzy identification of systems and its
application to modeling and control," \emph{IEEE Trans. Syst., Man,
	Cybern.,} vol. 15, pp. 116-132, Jan./Feb. 1985.

\bibitem{CHFang2006}
C. -H. Fang,  Y. -S. Liu, S. -W. Kau, L. Hong and  C. -H. Lee, ``A new
LMI-based approach to relaxed quadratic stabilization of T-S fuzzy
control systems," \emph{IEEE Trans. Fuzzy Syst.,} vol. 14, no. 3, pp. 386-397, Jun. 2006.

\bibitem{k2000}
E. Kim and  H. Lee,  ``New approaches to relaxed quadratic stability
condition of fuzzy control systems," \emph{IEEE Trans. Fuzzy Syst.,} vol. 8, no. 5, pp. 523-534, Oct. 2000.


\bibitem{zhouren2010}
S. Zhou, W. Ren and J. Lam,  ``Stabilization for T-S model based
uncertain stochastic systems," \emph{Inf. Sci.,} vol. 181, no. 4, pp. 779-791, Feb. 2011.

\bibitem{gv04}
T. M. Guerra and L. Vermeiren,  ``LMI-based relaxed non-quadratic
stabilization conditions for nonlinear systems in Takagi-Sugeno's
form," \emph{Automatica,} vol. 40, no. 5, pp. 823-829, May. 2004.

\bibitem{gb2012}
T. M. Guerra,  M. Bernal and K. Guelton, ``Non-quadratic local
stabilization for continuous-time Takagi-Sugeno models," \emph{Fuzzy Sets Syst.,} vol. 201, no. 16, pp. 40-54, Aug. 2012.

\bibitem{LAMozelli2009}
L. A. Mozelli, R. M. Palhares and G. S. C. Avellar,  ``A systematic
approach to improve multiple Lyapunov function stability and
stabilization conditions for fuzzy systems," \emph{Inf. Sci.,}  vol. 179, no. 8, pp. 1149-1162, Mar. 2009.

\bibitem{SZhou2005}
S. Zhou, G.  Feng, J. Lam and S.  Xu,   ``Robust $H_\infty$ control for
discrete fuzzy system via basis dependent Lyapunov functions," \emph{Inf. Sci.,} vol. 174, no. 3-4, pp. 197-217, Aug. 2005.

\bibitem{zhou07}
S. Zhou, J. Lam and W. X. Zheng, ``Control design for fuzzy systems
based on relaxed nonquadratic stability and $H_\infty$ performance
conditions," \emph{IEEE Trans. Fuzzy Syst.} vol. 15, no. 2, pp.
188-199, Apr. 2007.



\bibitem{KTanaka2003}
K. Tanaka, T. Hori and H. O. Wang,  ``A multiple lyapunov function
approach to stabilization of fuzzy control systems," \emph{IEEE
	Trans. Fuzzy Syst.,} vol, 11, no. 4, pp. 582-589, Aug. 2003.

\bibitem{BZhang2007}
B. Zhang,  S. Xu, G. Zong, and  Y. Zou,   ``Delay-dependent stabilization
for stochastic fuzzy systems with time delays," \emph{Fuzzy Sets
	Syst.,} vol. 158, no. 20, pp. 2238-2250, Oct. 2007.

\bibitem{RW2006}
B. -J. Rhee and S. Won,  ``A new fuzzy Lyapunov function approach for a
Takagi-Sugeno fuzzy control system design," \emph{Fuzzy Sets Syst.,} Vol. 157, no. 9, pp. 1211-1228, May  2006.

\bibitem{gfwq2013}
Q. Gao,  G. Feng and Y. Wang, ``Universal fuzzy models and universal fuzzy controllers for stochastic
nonaffine nonlinear systems," \emph{IEEE Trans. Fuzzy Syst.,} vol. 21, no. 2, pp. 328-341, Apr. 2013.

\bibitem{LHu2009}
L. Hu and  A. Yang,  ``Fuzzy model-based control of nonlinear stochastic
systems with time-delay," \emph{Nonlinear Analysis,} vol. 71, no. 12, pp. 2855-2865, Dec. 2009.

\bibitem{PuyinLiua2005}
P. Liu and H. Li,  ``Approximation of stochastic processes by T-S fuzzy
systems," \emph{Fuzzy Sets Syst.,} vol. 155, no. 2, pp. 215-235, Oct. 2005.

\bibitem{TMalinowski2015}
M. Malinowski,  ``Fuzzy and set-valued stochastic differential equations
with local Lipschitz condition," \emph{IEEE Trans. Fuzzy Syst.,} vol. 23, no. 5, pp. 1891-1898, Oct. 2015.

\bibitem{chang2010}
W. -J. Chang,  C. -C. Ku and P. -H. Huang,  ``Robust fuzzy control for
uncertain stochastic time-delay Takagi-Sugeno fuzzy models for
achieving passivity," \emph{Fuzzy Sets Syst.,} vol. 161, no. 15, pp. 2012-2032, Aug. 2010.

\bibitem{sheng2009}
L. Sheng,  M. Gao and  H. Yang,  ``Delay-dependent robust stability for
uncertain stochastic fuzzy Hopfield neural networks with time-varying
delays," \emph{Fuzzy Sets Syst.,} vol. 160, no. 24, pp. 3503-3517, Dec. 2009.


\bibitem{XMao1997}
X. Mao,  ``Stochastic differential equations and applications," \emph{Chichester, U.K.: Horwood,}  1997.

\bibitem{lk2014}
D. Lee and  D. Kim, ``Relaxed LMI conditions for local stability and local stabilization of continuous-time Takagi-Sugeno fuzzy systems," \emph{IEEE Trans. Circuits Syst.,} vol. 44, no. 3, pp. 394-405, Mar. 2014.

\bibitem{zhou05a}
S. Zhou, G. Feng and C. Feng, ``Robust control for a class of
uncertain nonlinear systems: adaptive fuzzy approach based on
backstepping," \emph{Fuzzy Sets Syst.}, vol. 151, no. 1, pp. 1-20,
Apr. 2005.

\bibitem{GK2006}
J. C. Geromel and  R. H. Korogui,  ``Analysis and synthesis of robust
control systems using linear parameter dependent Lyapunov functions," \emph{IEEE Trans. Autom. Control,} vol. 51, no. 12, pp. 1984-1989, Dec.  2006.

\bibitem{LElGhaoui1997}
L. El. Ghaoui, F. Oustry, and M. AitRami, ``A cone complementarity
linearization algorithm for static output-feedback and related
problems," \emph{IEEE Trans. Autom. Control,} vol. 42, no. 8, pp. 1171-1176, Aug. 1997.

\bibitem{higham2001}
D. Higham,  ``An algorithmic introduction to numerical simulation of
stochastic differential equations," \emph{SIAM Review,} vol. 43, no. 3, pp. 525-546, Sep. 2001.

\bibitem{YugangNiu2005}
Y. Niu,  D. W. C. Ho and J. Lam,  ``Robust integral sliding mode control
for uncertain stochastic systems with time-varying delay," \emph{Automatica,} vol. 41 no. 5, pp. 873-880, May 2005.



\end{thebibliography}
\end{document}